\def\BibTeX{{\rm B\kern-.05em{\sc i\kern-.025em b}\kern-.08em
    T\kern-.1667em\lower.7ex\hbox{E}\kern-.125emX}}
\newtheorem{theorem}{Theorem}
\newtheorem{corollary}{Corollary}
\newtheorem{lemma}{Lemma}
\newtheorem{assumption*}{Assumption}
\newtheorem{stdassumption*}{Standing Assumption}
\newtheorem{definition}{Definition}
\newtheorem{definition*}{Definition}
\newacronym{tdth}{TD-TH}{time domain thresholding}
\newacronym{tfdth}{TFD-TH}{time-frequency domain thresholding}
\newacronym{fmcw}{FMCW}{frequency modulated continuous wave}
\newacronym{pd}{PD}{probability of detection}
\newacronym{sinr}{SINR}{signal-to-interference-plus-noise ratio}
\newacronym{pri}{PRI}{pulse repetition interval}
\newacronym{adc}{ADC}{analog-to-digital converter}
\newacronym{fft}{FFT}{fast Fourier transform}
\newacronym{lrr}{LRR}{long-range radar}
\newacronym{srr}{SRR}{short-range radar}
\newacronym{cfar}{CFAR}{constant false alarm rate}
\newacronym{rfe}{RFE}{radar front end}
\newacronym{lpf}{LPF}{low-pass filter}
\newacronym{rd}{RD}{range-Doppler}
\newacronym{cdf}{CDF}{cumulative distribution function}
\newacronym{adas}{ADAS}{advanced driver assistance system}
\newacronym{ad}{AD}{autonomous driving}
\newacronym{stft}{STFT}{short-time Fourier transform}
\newacronym{aeb}{AEB}{automatic emergency braking}
\newacronym{vru}{VRU}{vulnerable road user}
\newacronym{acc}{ACC}{adaptive cruise control}
\newacronym{tx}{TX}{transmitter}
\newacronym{rx}{RX}{receiver}
\newacronym{lfm}{LFM}{linear frequency modulated}
\newacronym{cpi}{CPIs}{coherent processing intervals}
\newacronym{vtx}{V2X}{vehicle-to-everything}
\newacronym{RadCom}{RadCom}{radar-communication}
\newacronym{snr}{SNR}{Signal-to-noise Ratio}
\newcommand{\C}{\mathbb{C}}
\newcommand{\Tpri}{T_{\text{PRI}}}
\newcommand{\Ical}{ \mathcal{I} }
\title{Decentralized No-Regret Frequency-Time Scheduling for FMCW Radar Interference Avoidance}
\author{
Yunian Pan\\
Department of Electrical and Computer Engineering\\
New York University\\
New York, NY, USA\\
\texttt{yp1170@nyu.edu}
\And
Jun Li\\
NXP Semiconductors\\
San Jose, CA, USA\\
\texttt{jun.li\_5@nxp.com}
\And
Lifan Xu\\
Department of Electrical and Computer Engineering\\
The University of Alabama\\
Tuscaloosa, AL, USA\\
\texttt{lxu36@crimson.ua.edu}
\AND
Shunqiao Sun\\
Department of Electrical and Computer Engineering\\
The University of Alabama\\
Tuscaloosa, AL, USA\\
\texttt{shunqiao.sun@ua.edu}
\And
Quanyan Zhu\\
Department of Electrical and Computer Engineering\\
New York University\\
New York, NY, USA\\
\texttt{qz494@nyu.edu}
}
\date{}
\begin{document}

\maketitle

\begin{abstract}
    
Automotive frequency modulated continuous wave (FMCW) radars are indispensable to modern advanced driver assistance systems (ADAS) and autonomous-driving systems, but their increasing density has intensified the risk of mutual interference. Existing mitigation techniques, including reactive receiver-side suppression, proactive waveform design, and cooperative scheduling, often face limitations in scalability, reliance on side-channel communication, or degradation of range–Doppler resolution. Building on our earlier work on decentralized frequency-domain no-regret hopping, this paper introduces a unified time–frequency game-theoretic framework that enables radars to adapt across both spectral and temporal resources. We formulate the interference-avoidance problem as a repeated anti-coordination game, in which each radar autonomously updates a mixed strategy over frequency subbands and chirp-level time offsets using regret-minimization dynamics. We show that the proposed Time–Frequency No-Regret Hopping algorithm achieves vanishing external and swap regret, and that the induced empirical play converges to an $\varepsilon$-coarse correlated equilibrium or a correlated equilibrium. Theoretical analysis provides regret bounds in the joint domain, revealing how temporal adaptation implicitly regularizes frequency selection and enhances robustness against asynchronous interference. Numerical experiments with multi-radar scenarios demonstrate substantial improvements in signal-to-interference-plus-noise ratio (SINR), collision rate, and range–Doppler quality compared with time-frequency random hopping and centralized Nash-based benchmarks. 

\end{abstract}

\keywords{automotive radar, game theory, frequency hopping, interference mitigation, interference avoidance}
\glsresetall

\section{ Introduction}

High-resolution \gls{fmcw} radars are central to \gls{adas} and \gls{ad}, providing accurate range, velocity, and angle estimates under diverse environmental conditions \cite{Sun_SPM_2020, Sun_JSTSP_4D_2021, Automotive_Radar_Challenges_ICASSP_2025,Lifan_Co_Chirps_TAES_2023,Ruxin_Radar_Imaging_TAES_2023,Markel_book_2022,BLRC,tao23cola}. \gls{fmcw} radars enable hazard detection and autonomous decision-making, serving as a cornerstone of vehicle safety and autonomy, and are thus essential for both current and future vehicles.

\subsection{Motivation}

Modern automotive \gls{fmcw} radars predominantly operate in the 76-81 GHz bands \cite{hakobyan2019high}, with some systems, such as Forvia/Hella’s 3rd and 4th generations, also utilizing 24 GHz.
While the 24 GHz ISM band offers limited bandwidth (200–250 MHz depending on region), the 76–81 GHz band supports both long- and short-range sensing, making it central to \gls{adas}. 
As radar adoption surges across automotive platforms, the risk of mutual interference has grown acute.  Without effective interference mitigation, \gls{sinr} deteriorates, compromising detection performance in safety-critical functions such as \gls{aeb}, \gls{vru} protection, and \gls{acc}.
The importance of interference mitigation has been recognized across industry and regulatory bodies through initiatives such as the EU MOSARIM program, the NHTSA Radar Congestion Study \cite{buller2018radar}, and Germany’s IMIKO-Radar project \cite{ossowska2021imiko}. A broad line of research has since examined interference mitigation using orthogonal signaling, reactive receiver-side suppression, proactive waveform design, protocol-based scheduling, and more recently, \gls{vtx}-assisted coordination. Despite these developments, ensuring reliable radar operation under dense and heterogeneous deployments remains challenging. The limitations of centralized or communication-dependent approaches highlight the need for adaptive and fully decentralized interference-avoidance mechanisms.


\subsection{Existing Interference Mitigation Approaches}

A broad spectrum of interference mitigation strategies has been investigated to sustain the reliable operation of automotive radars under dense deployment \cite{Sun_SPM_2020,Sun_JSTSP_4D_2021}. These techniques can be broadly classified into three sub-categories.

\subsubsection{\textbf{Receiver-side suppression methods}} focus on detecting and suppressing interference after reception. Common strategies include time-domain blanking or clipping \cite{li2024performance}, time-frequency thresholding \cite{jun2022radar,li2024time_frequency_dual_domain,jihwan2024intf}, and adaptive filtering \cite{jeroen19}, which selectively remove contaminated portions of the received signal. More recent work leverages deep learning models such as score-based generative networks, recurrent neural networks, and attention mechanisms to identify and reconstruct interference-free signals \cite{xinyi2024,xinyi2025,Mun20}. Although these approaches can be effective in moderate interference conditions, their performance degrades when the number of interferers increases. In dense scenarios, interference often overlaps with the target echo in both the time domain and the time–frequency domain, making it fundamentally difficult to reliably separate and recover the original signal energy and thereby fully restore the \gls{snr}. Moreover, some deep learning–based solutions may introduce additional latency or computational cost.

\subsubsection{\textbf{Transmitter-side (proactive) methods}} aim to prevent interference before it occurs by diversifying radar waveforms. Classical techniques exploit orthogonality in time, frequency, or code domains, while more advanced designs employ nonlinear frequency hopping or multiband chirps to reduce collision probability while preserving range resolution \cite{Lifan_Co_Chirps_TAES_2023,Sun_JSTSP_4D_2021,Yimin_Partioning_SAM_2024,stettiner2023fmcw}. These approaches can improve the received \gls{sinr}, but they also exhibit several limitations. Without coordination among radars, waveform diversification remains vulnerable to random overlap, especially in dense traffic where the available time–frequency space is limited. In addition, many waveform modifications introduce inherent trade-offs with sensing performance, including reduced range or velocity resolution, increased sidelobes, or degraded Doppler coherence. The resulting designs are often difficult to adapt in real time to rapidly changing interference conditions, and their effectiveness diminishes in heterogeneous environments where different radar platforms operate with different bandwidths, slopes, and timing configurations. 


\subsubsection{\textbf{Cooperative and protocol-based methods}} leverage communication between vehicles or radars to coordinate spectrum access. Examples include MAC-style scheduling \cite{Google_Automotive_Interference_Mitigation_2016}, radar networking via LTE/5G, and \gls{RadCom} systems enabled by \gls{vtx} links \cite{Jin_FMCW_Interference_JSTSP_2021}. These methods can, in principle, guarantee interference-free operation, but they rely on robust side channels and synchronized participation, which may not be feasible in heterogeneous or adversarial settings.


%
\subsection{No-Regret Hopping As a Hybrid Approach}

To address the aforementioned issues, we introduced a game-theoretic framework for frequency scheduling in \cite{pan2025game}, where each radar is modeled as a strategic player that selects frequency-hopping schemes to optimize its own performance under mutual interference. In this formulation, the utility of each player is defined by its resulting \gls{sinr}, and the goal is to learn adaptive hopping strategies that mitigate interference while preserving range resolution.  

Within noncooperative game theory, two canonical solution concepts are often considered: 1) \textit{Nash Equilibrium} (NE) \cite{bacsar1998dynamic}, where no player has an incentive to unilaterally deviate from its chosen strategy. 2) \textit{Coarse Correlated Equilibrium} (CCE) \cite{roughgarden2010algorithmic}, a broader class of equilibria defined as joint distributions over strategies such that players cannot improve their expected utility by deviating from signaled recommendations.  
We demonstrated that due to several drawbacks of NE in radar scenarios, namely equilibrium selection, price of anarchy, and computational infeasibility (PPAD-hard \cite{thecomplexityofne}), learning a CCE turns out to be more computationally feasible and economical \cite{pan2023resilience,pan2024variationalinterpretationmirrorplay,tao22confluence}.  
Building on this insight, the proposed \textit{No-Regret Hopping}, a decentralized algorithm that leverages regret minimization to approach a CCE, demonstrated several advantages: it requires no inter-radar communication, scales naturally to dense deployments, and empirically outperformed both the centralized \textit{Nash Hopping} approach \cite{9682998} and uniformly random hopping. 

While these results established No-Regret Hopping as a robust frequency-domain solution, the framework in \cite{pan2025game} remained restricted to probabilistic frequency allocation across chirps and therefore did not exploit the temporal dimension available between consecutive linear chirps. As illustrated in Fig.~\ref{fig:timeshiftillustration}, limiting adaptation to frequency shifting alone overlooks the additional temporal resource, and independent subband sampling across chirps inevitably induces a trade-off between range resolution and interference avoidance. To overcome this limitation, the present work extends the no-regret learning framework into the \textit{time-shifting domain}, enabling joint adaptation over both frequency and time. This extension enables a broader utilization of the allocable frequency-time resource. 
\begin{figure}
    \centering
    \includegraphics[width=0.9\linewidth]{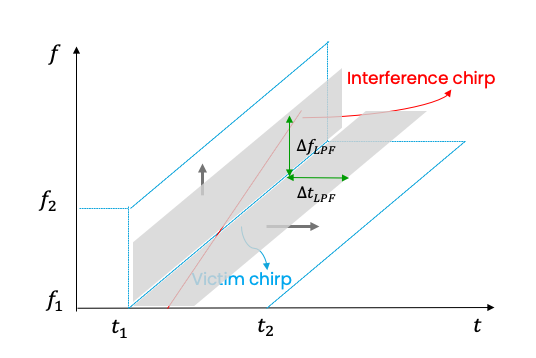}
    \caption{Illustration of interference mitigation through frequency and time shifting. When interference occurs, two potential adaptation directions are available (gray arrows). }
    \label{fig:timeshiftillustration}
\end{figure}



\subsection{Related Work and Contributions}
Several prior studies \cite{Lifan_Co_Chirps_TAES_2023} have explored time-domain techniques to mitigate radar interference in distributed environments. One common approach involves introducing chirp-level time shifts within a radar frame to reduce the likelihood of simultaneous transmissions from multiple devices. For example, some works propose random or pseudo-random time dithering of chirp start times to decorrelate overlapping signals and suppress coherent interference. These methods are particularly effective in uncoordinated settings where centralized scheduling is not feasible. In addition, time-division randomization has been shown to improve detection performance by spreading potential collisions across the slow-time dimension, thereby reducing persistent interference patterns. While these techniques can reduce interference probability, they are often used in isolation and may not fully exploit the benefits of combining time and frequency diversity. Our work builds on these insights by integrating slow-time chirp shifting with frequency hopping, aiming to achieve a more robust and scalable interference avoidance strategy in dense radar networks.

\textbf{Time-shifting strategies.} While most prior research on interference mitigation has focused on frequency-domain approaches, a growing body of work recognizes the importance of temporal alignment between chirps. Early reactive methods employed time-domain blanking or delay-based suppression to mitigate overlapping interference \cite{li2024performance,jeroen19}, but these strategies do not proactively exploit the temporal degrees of freedom. More recent approaches have considered randomized pulse repetition intervals (PRIs) or asynchronous chirp scheduling \cite{jun2022radar,Jin_FMCW_Interference_JSTSP_2021}, which reduce collision probability but lack formal guarantees and often degrade range–Doppler resolution. To the best of our knowledge, no existing work integrates time-domain shifting into a learning-theoretic framework with provable convergence guarantees.

\textbf{Our contributions.} This paper extends the previous \textit{No-Regret Hopping} framework \cite{pan2025game} in the frequency-domain into the time-shifting domain, thereby unifying frequency and temporal adaptation for FMCW radar interference avoidance. The main contributions are as follows:
\begin{itemize}
    \item We reformulate the radar interference avoidance problem as a repeated game where radars adapt both their subband allocation and temporal chirp positioning, thus capturing the joint frequency–time trade-off.
    \item We develop a decentralized \textit{Time–Frequency No-Regret Hopping} algorithm that guarantees vanishing external regret, leading to convergence toward an $\varepsilon$-Correlated/Coarse correlated equilibrium (CCE/CE), providing a theoretical analysis of regret bounds in the joint domain and demonstrating how implicit temporal regularization improves robustness against asynchronous interference.
    \item We validate the proposed framework through simulations that show significant improvements in SINR and target detection quality compared to purely randomized frequency-hopping-time-shifting strategy, and benchmark predetermined Nash equilibrium strategies.
\end{itemize}

\textbf{Organization of the paper.} The remainder of the paper is structured as follows. Section~\ref{sec:radarsystemasgame} introduces the radar signal model and formulates the interference avoidance game with joint frequency–time resources. Section~\ref{sec:solconceptalgo} presents the proposed no-regret learning framework and provides theoretical analysis. Section~\ref{sec:experiment} reports numerical simulations and performance evaluations. Finally, Section~\ref{sec:conclusion} concludes the paper and discusses future directions, including heterogeneous radar deployments and real-time implementation challenges.

\section{Game  Formulation and Signal Model}
\label{sec:radarsystemasgame}

We consider a system comprising a set of automotive FMCW radars denoted by $\Ical = \{1, \ldots, I \}$, each operating in real time to detect a single target object. 
Every radar transmits \gls{lfm} chirps over a total bandwidth $B_{\text{tot}}$ centered at frequency $f_c$.

The radar operations span $\mathcal{T}$ \gls{cpi}, indexed by $\tau \in \{1, \ldots, \mathcal{T}\}$, each CPI consists of duration $T$, within which a radar $i \in \Ical$ transmits $K^i$ discrete chirps sequentially. The \gls{pri} for each radar $i$ is given by $\Tpri^{i} = \frac{T}{K^{i}}$,
which consists of an active period $T^{i}_a$ and an idle time $T^{i}_d$, such that
\begin{equation*}
    \Tpri^{i} = T^{i}_a + T^{i}_d, \quad \text{with} \quad T^{i}_d > 0.
\end{equation*}
During the active interval $T^{i}_a$, the radar emits a chirp that sweeps a subband of bandwidth $B^{i}$, yielding a chirp slope of $\alpha^{i} = \frac{B^i}{T^{i}_a}$.

We discretize the frequency and time resources as follows:
\begin{itemize}
    \item The total bandwidth $B_{\text{tot}}$ is partitioned into $A_f$ (potentially overlapped) subbands $\{f_1, \ldots, f_{A_f}\}$, where each subband has bandwidth $B^i$ and is defined by a starting frequency $f_a = f_c + (a-1)\Delta f$, for $a \in \{1, \ldots, A_f\}$ with $\Delta f$ denoting the subband frequency offset (or hopping).
    \item The idle duration $T_d$ is illustrated in \Cref{fig:idletime}. The idle time is divided into $A_t$ discrete time slots $\{t_1, \ldots, t_{A_t}\}$, where each represents the starting time of the active period $t_b = (b-1)\Delta t$, for $b \in \{1, \ldots, A_t\}$ with $\Delta t$ denoting the chirp time domain shifting.
\end{itemize}

\begin{figure}
    \centering
    \includegraphics[width=.9\linewidth]{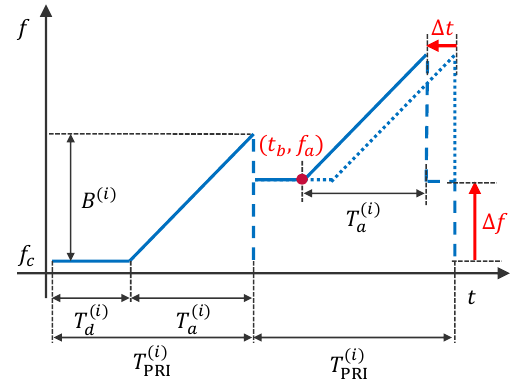}
    \caption{Joint time–frequency scheduling illustration. 
    Each chirp selects a subband start $f_a$ and a time offset $t_b$ within the idle segment $T_{d}^{(i)}$. 
    The subband frequency hopping and time shifting width are $\Delta f$ and $\Delta t$, forming the joint action $(t_b, f_a)$.}
    \label{fig:idletime}
\end{figure}

Hence, every chirp corresponds to a joint time–frequency action. The joint time–frequency action space is a Cartesian product
$
    \mathcal{A} = \{f_1, \ldots, f_{A_f}\} \times \{t_1, \ldots, t_{A_t}\},
$
so that radar $i$ selects one subband $f_a$ and one start time $t_b$ per chirp.

\subsection{Graphical Game Model}

To proactively avoid interference, the radars strategically select frequency hopping and time shifting sequences $(f^{i}_1, t^{i}_1, \ldots, f^{i}_{K^{i}}, t^{i}_{K^{i}})$ across chirps within each CPI. 
We denote the composition of the channel selection and active period selection strategies of radar $i$ by $\sigma^{i}(\cdot): \{1, \ldots, K^{i}\} \to \mathcal{A}$; for example, $\sigma^{i}(k) = (f, t)$ indicates that radar $i$ selects subband $[f, f+ B]$ at chirp $k$, sending one LFM pulse between $[t, t+T_a ]$. Let $\Sigma^{i}$ denote the strategy set of radar $i$. 
Note that $\Sigma^{i}$ does not necessarily exhaust all possible mappings; for instance, it may be defined as a set of periodic mappings with an offset $n$, that is,
$
\Sigma^{i} := \bigl\{ \sigma \mid \sigma^i(k) = \bigl( f_{(k+n) \bmod A_f},\; t_1 \bigr),\ n \in \mathbb{N} \bigr\},
$
Note that in general, the design of the strategy sets is not restricted to such configurations.

\begin{figure}
    \centering
    \includegraphics[width=.9\linewidth]{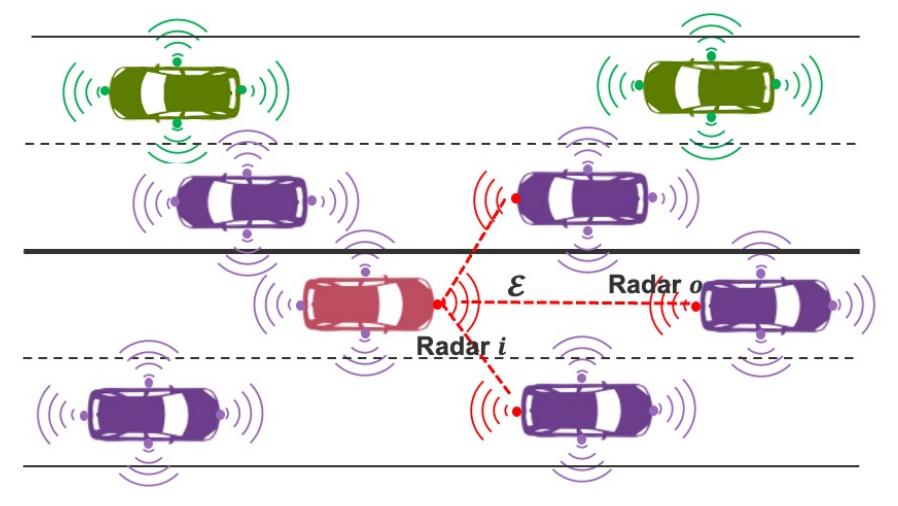}
    \caption{Typical automotive radar interference scenario. The red vehicle’s front radar receives interference from three radars on nearby purple vehicles, i.e., two front-facing and one rear-facing. Mutual interference links among these distributed radars form a graphical game representation.}
    \label{fig:fig_scenario}
\end{figure}

Then, the typical radar-to-radar interference scenario as shown in Fig.~\ref{fig:fig_scenario} can be modeled as a $\mathcal{T}$-stage repeated graphical game $\Gamma = \{ \Ical, \{ \Sigma^{i} \}_{i \in \Ical }, \mathcal{G}, \{ U^{i} \}_{i \in \Ical }\}$, where the stages $\mathcal{T}$ are CPIs; the radars, acting as players in $\Ical$, share a common strategy set $\Sigma$. 
The interference relationships among a set of radars are modeled by a weighted, directed graph 
\(\mathcal{G} = (\Ical, \mathcal{E}, \Delta)\), 
where each node \(i \in \Ical\) corresponds to radar \(i\), and each directed edge \((i \to o)\in \mathcal{E}\) indicates that radar \(i\) can interfere with radar \(o\). 
In addition, each directed edge $(i  \to  o)\in\mathcal{E}$ carries a delay $\Delta_{io}$ (seconds), the arrival-time offset of radar $i$’s waveform as measured in radar $o$’s local clock. We assume $\mathcal{G}$ is CPI-invariant for simplicity.  Therefore, \(\mathcal{G}\) fully specifies who interferes with whom and when. In this work, we assume $\mathcal{G}$ is stage independent for simplicity.
The (also stage-independent) utility functions $U^{i}: \prod_{i \in \Ical}\Sigma^{i} \to \mathbb{R}$ endow this game with the anti-coordination property, where the pulse transmission strategies $\sigma^{i}$ employed by players are from potentially different strategy sets $\Sigma^{i}$, but whenever two or more radars' transmission strategies agree, say, $\sigma^{i}(k) \approx \sigma^{o}(k)$ at chirp $k$ for some radars $i$ and $o$, the utilities of both radars decrease. Specifically, $(f^{i}_k, t^{i}_k)$ being close to $(f^{o}_k, t^{o}_k)$ degrades the detection performance of both radar $i$ and $o$ when $(i,o) \in \mathcal{E}$ and the arrival times $t^{i}_k$ and $t^{o}_k + \Delta_{io}$ are close enough.
In this paper, we give the following definition for players' utility functions, for each stage $\tau = 1, \ldots, \mathcal{T}$,
\begin{equation}\label{eq:utility}
    U^{i} (\sigma^{i}_{\tau}, \sigma^{{-i}}_{\tau}) =   u(\mathrm{SINR}^{i}_{\tau}) , \quad 
\end{equation}
where $\mathrm{SINR}^{i}_{\tau}$ represents the average obtained \gls{sinr} of CPI $\tau \in \{1, \ldots, \mathcal{T} \}$, at radar $i$'s receiver, determined by the joint strategies $\sigma^{i}_{\tau}, \sigma^{{-i}}_{\tau}$ for CPI $\tau$; $u(\cdot)$\footnote{In practice we use a saturating map $g(s)=\frac{s^\beta}{s^\beta+S_0^\beta}$ on linear SINR, with $\beta  \in  (0,\infty)$ and $S_0$ denoting the SINR threshold at which the utility equals $0.5$, i.e., $u(S_0)=0.5$; this facilitates algorithmic stability.} is a strictly increasing differentiable function that maps the \gls{sinr} to a scalar in $(0,1)$.
The scalar $\bar U^i_\tau$ is appended to the utility buffer $\mathcal U$ and fed to one of the regret–minimization subroutines.
In the absence of interference, the SINR reduces to the signal-to-noise ratio (SNR). 
In what follows, we formally present the signal modeling and processing, interference modeling and filtering, and output SINR estimation.

\subsection{Signal Model and Match Filtering }

Let radar $i \in \Ical$ adopt a transmission strategy $\sigma^i$, consisting of a sequence of starting frequencies $(f^i_k)_{k=1}^{K^i}$ and a sequence of start times $(t^i_k)_{k=1}^{K^i}$ for each of the $K^i$ transmitted chirps. Consider $\mathcal{L} := \{1, \ldots, L\}$ targets within its field of view, with range vector $(r^i_\ell)_{\ell=1}^{L}$ and velocity vector $(\dot{r}^i_\ell)_{\ell=1}^{L}$. 
Approaching targets satisfy $\dot{r}^i_\ell < 0$.

The transmitted LFM signal at the $k$-th chirp is modeled as:
\begin{equation*}
    s^i_{\text{tx}}[t,k] = \begin{cases} e^{j 2\pi \left( f^i_k (t - t^i_k) + \frac{1}{2} \alpha^i (t - t^i_k)^2 \right)}, \quad  & t^i_k \leq t < t^i_k + T^i_a , \\ 0 ,& \text{otherwise}
\end{cases} 
\end{equation*}
and the received echo from the $\ell^{th}$ target is then given by:
\begin{equation*}
    s^i_{\text{rx}}[t,k] = \begin{cases} a^i_\ell s^i_{\text{tx}}[t-\tau^i_{k,\ell},k],& t^i_k + \tau^i_{k,\ell} \leq t < t^i_k + \tau^i_{k,\ell} \\
    &+ T^i_a, \\ 0, \quad \quad  & \text{otherwise},  
    \end{cases}
\end{equation*}
where $a^i_\ell$ is the complex target coefficient and $\tau^i_{k,\ell}$ is the round trip delay at chirp $k$, given by
$ \tau^i_{k,\ell} = \frac{2 }{c} \left[ r^i_\ell + \dot{r}^i_\ell    (k\Tpri^i - \Tpri^i + t^i_k - t^i_1) \right],$
with $c$ denoting the speed of light.


The echo signal, mixed with the transmitted signal, yields the system output:
\[
\begin{aligned}
    y^i[t,k]
    &=  s^i_{\text{tx}}[t,k]  \sum_{\ell=1}^{L}  s^{i*}_{\text{rx},\ell}[t,k] \\[3pt]
    &\approx \sum_{\ell=1}^{L} \tilde{a}^i_\ell   
    e^{  \jmath 2\pi   \left( f^i_k \tau^i_{k,\ell} + \alpha^i \tau^i_{k,\ell} t \right) }, 
\end{aligned}
\]
where constant terms are consolidated in the complex target coefficient $\tilde{a}^i_\ell$ and the coarse range migration across chirps~\cite{stettiner2023fmcw} is negligible due to small chirps with short bandwidth $B^i$.

To decouple coarse and fine range components, the target range is decomposed as
     $r^i_\ell + \dot{r}^i_\ell    (k\Tpri^i - \Tpri^i + t^i_k - t^i_1) 
     = \bar{r}^i_\ell + \left[ \epsilon^i_\ell + \dot{r}^i_\ell    (k\Tpri^i - \Tpri^i + t^i_k - t^i_1) \right],$
where $\bar{r}^i_\ell$ denotes the coarse range center of $\ell^{th}$ target with bin width $\frac{c}{2B^i}$, and $\epsilon^i_\ell \in \left[-\frac{c}{4B^i}, \frac{c}{4B^i}\right]$ is the fine offset.

Then the received echo becomes:
\[
\begin{aligned}
\quad \ y^i[t,k] &= \sum_{\ell=1}^{L}   \tilde{a}^i_\ell   
    e^{  \jmath 2\pi   \left( f^i_k \tau^i_{k,\ell} + \alpha^i \tau^i_{k,\ell} t \right) } \\ 
    & =  \sum_{\ell=1}^{L}  \tilde{a}^i_\ell    e^{j 2\pi \left[f_R t + f_D (k\Tpri^i - \Tpri^i + \Delta t_k^i) + \Delta f_k^i \tau^i_{k,\ell}   \right]} 
\end{aligned}
\]
where the hopping frequency offset is defined as $\Delta f_k^i = f_k^i - f_c$ and the time shift as $\Delta t_k^i = t_k^i - t_1^i$. The coarse range frequency is $f_R := \alpha^i\frac{2 r_\ell^i}{c}$, and the Doppler frequency is $f_D := f_c\frac{2 \dot{r}_\ell^i}{c}$. Without frequency hopping and time shifting (i.e., $\Delta f_k^i = 0$ and $\Delta t_k^i = 0$), the signal reduces to the standard form. Otherwise, the term $\Delta f_k^i \tau_{k,\ell}^i$ enables finer range resolution \cite{pan2025game}, while the term $\Delta t_k^i$ introduces a nonuniform pulse repetition frequency (PRF). Note that this nonuniform PRF will yield high
sidelobes in the Doppler spectrum and can be solved by advanced algorithms such as compressed sensing (CS) or joint range-Doppler processing \cite{xu2023automotive} which is another important topic and is out of the scope of this paper. Since Doppler resolution is primarily determined by the coherent processing interval (CPI), the time shift $\Delta t_k^i$, constrained within $\Tpri^i$, does not change the resolution but increases the likelihood of interference avoidance by randomizing chirp starting timings. Note that such nonuniform transmission along slow time significantly reduces the probability that a victim radar experiences persistent interference compared with uniform transmission. In practical deployments, when two radars share similar $\Tpri$ values, interference between their first overlapping chirps often persists across subsequent chirps under uniform transmission. 


Let $\mathbf{Y}^i \in \C^{N_t^i \times K^i}$ be the down-sampled and low-pass filtered ADC matrix for radar $i$ with columns indexed by chirp $k$ and rows by fast-time samples $n$. We process $\mathbf{Y}^i$ in three steps, i.e., fast-time processing, frequency hopping and time shifting compensation, nonuniform slow-time processing, yielding coarse range, fine range, and Doppler (velocity) estimates.
\paragraph*{1) Coarse-range processing}
Apply a fast-time coherent integration processing per chirp:
\begin{equation}
    \mathbf{Y}^i_{\mathrm{cr}} = \mathbf{R}^i   \mathbf{Y}^i,
\quad
    R^i_{n,m} = e^{-j \frac{2\pi}{N_t^i} n m},\ \ n,m=0,\ldots,N_t^i-1,
\label{eq:coarse_range_fft}
\end{equation}
which maps fast time to coarse range bins $r_m = \bar{r}^i_\ell$.

\paragraph*{2) Joint frequency-time compensation}
For each coarse bin $m$, chirp $k$, and a candidate fine-range offset $\epsilon_p \in [-\frac{c}{4B^i}, \frac{c}{4B^i }]$, construct
\begin{equation}
\begin{aligned}
    & \quad  Y_{cp}[m,k,p] \\
    & = Y^i_{cr}[m, k]  e^{j2\pi \left( \Delta f^i_k (\frac{2 r_m }{c} + \frac{2  \epsilon_p }{c} + \frac{2v_q (k\Tpri^i - \Tpri^i + \Delta t_k^i)}{c} ) \right) },     
\label{eq:hop_time_comp}
\end{aligned}
\end{equation}
where $\Delta f_k^i$ and $\Delta t_k^i$ are known by radar $i$'s receiver, $v_q$ is a velocity-grid value used when forming a joint $(\epsilon_p,v_q)$ hypothesis. The exponential term aligns inter-chirp phases introduced by frequency hopping and time-shifting so that fine-range information becomes coherent across $k$.

\paragraph*{3) Nonuniform slow-time processing}
For each $p$ and $q$-grid used in \eqref{eq:hop_time_comp}, apply the slow-time nonuniform matching along chirp index $k$:
\begin{equation}
    \mathbf{Y}^i_{\mathrm{out}}[:,q,p] 
    = \mathbf{Y}_{\mathrm{cp}}[:, :, p]   \mathbf{D}^i(:,q)
\label{eq:doppler_fft}
\end{equation}
and
\begin{equation}
     D^i_{k,q} = e^{-j \frac{2\pi(k\Tpri^i - \Tpri^i + \Delta t_k^i)}{K^i\Tpri^i} q},
\end{equation}
which forms a Range--Doppler (and fine-range) cube $Y^i_{\mathrm{out}}[m,q,p]$. We then select the global peak
$
    (m^*, q^*, p^*) 
    = \arg\max_{m,q,p} \left|Y^i_{\mathrm{out}}[m,q,p]\right|,
$
and read off the estimates$
    \hat r = r_{m^*} + \epsilon_{p^*}, \hat v \leftrightarrow q^* .$

\subsection{Interference Model and SINR Estimation}

At radar $i$'s receiver, the target echoes may be corrupted by interference from other interfering radars $o \in \text{neighbor}(\{i\})$, each transmitting at its own starting frequency $f^o_k$ and chirp slope $\alpha^o$.
\begin{equation*}
s^o_{tx}[t,k] = \begin{cases}  & e^{j 2\pi \left( f^o_k (t - \Delta_{io} - t^o_k) +  \frac{1}{2}  \alpha^o (t - \Delta_{io} - t^o_k)^2 \right)},  \\ &  \quad \quad \qquad  t^i_k \leq t < t^i_k + T^i_a , \\ & 0 , \quad \quad  \text{otherwise}.
\end{cases}
\end{equation*}

After dechirping (conjugate mixing with the victim's own TX reference), any interferer whose signal temporally overlaps with the victim's active window produces the beat signal
\begin{equation*}
    y^o[t,k] \approx \tilde{a}^o \, e^{j 2\pi \left[ (f^o_k - f^i_k)\, t \;+\; \frac{1}{2}(\alpha^o - \alpha^i)\, t^2 \right]},
\end{equation*}
where $\tilde{a}^o$ absorbs path loss and constant phase terms.
Interference corrupts the victim's measurement whenever this beat falls within the receiver's IF bandwidth, i.e., when the instantaneous beat frequency
$f_{\text{beat}}(t) = (f^o_k - f^i_k) + (\alpha^o - \alpha^i)\,t$
satisfies $|f_{\text{beat}}(t)| \leq f_{\text{IF,max}}$ for some $t$ during the active window.
This condition encompasses not only the ``chirp-crossing'' case where two ramps intersect ($f^o_k \neq f^i_k$ but slopes cause a crossing), but also the case where an aggressor chirp resides \emph{entirely within} the victim's configured bandwidth, producing a low beat frequency throughout the chirp duration.
In our simulations, the dechirping operation is applied to the full received waveform and the subsequent low-pass filter retains all beat components within the IF passband, thereby faithfully capturing both interference mechanisms~\cite{li2024performance}.

The overall received signal at radar $i$'s is then:
\begin{equation}
    \hat{\mathbf{y}}^i = \mathbf{y}^i  + \sum_{ o \in \mathbf{N}i}\mathbf{y}^o + \mathbf{e}^i,
\end{equation}
where $\mathbf{y}^i =  \{ y^i[t,k] \}$ is the desired (analogue) signal, (we have used $\mathbf{Y}^i$ to denote the down-sampled and low-pass filtered digital signal) $\sum_{ o \in \mathbf{N}i} \mathbf{y}^o$ is the interference signal aggregation from all the neighbors of radar $i$ in $\mathcal{G}$, i.e., $ o \in \mathbf{N}i \Rightarrow (o \to i) \in \mathcal{E}$, and $\mathbf{e}^i$ is the white noise. Note that here we omit the CPI index.

Let $\mathrm{P}(\cdot)$ denote average signal power. The theoretical \gls{sinr} for radar $i$ at CPI $\tau$ is expressed as:
\begin{equation}
    \mathrm{SINR}^i_\tau = \dfrac{\mathrm{P}(\mathbf{y}^i_\tau)}{\mathrm{P}( \sum_{ o \in \mathbf{N}i} \mathbf{y}^o_\tau) + \mathrm{P}(\mathbf{e}^i_\tau)}
\end{equation}


We assume that each radar estimates per-action SINR using its own \gls{rx} pipeline at the end of every CPI, without an extra side-channel required. 
The actual computation requires interference detection (e.g., thresholding~\cite{li2024performance}), and separating the signal into a clean part $\tilde{\mathbf{y}}^i_\tau$ and an interference part $\tilde{\mathbf{y}}^o_\tau$. 
In practice, due to limited observability, the ground truth \gls{sinr} values are infeasible. We defer the estimation method to Section \ref{sec:solconceptalgo}.


\section{Solution Concepts and Algorithm Design}\label{sec:solconceptalgo}

\subsection{Frequency-time Scheduling with Mixed Strategy}

We equip each radar $i$ with an adaptable mixed strategy $p^i_{\tau}$ throughout CPI $\tau=1,\ldots,\mathcal{T}$.
A mixed strategy is a probability distribution over the (deterministic) strategy class $\Sigma^i$, i.e.,
$p^i_{\tau}\in \Delta(\Sigma^i):=\{p:\Sigma^i\to[0,1]\mid \sum_{\sigma\in\Sigma^i} p(\sigma)=1\}$.
Before the start of CPI $\tau$, radar $i$ samples a transmission strategy $\sigma^i_\tau\sim p^i_\tau(\cdot)$ and then executes it deterministically across the $K^i$ chirps, which induces a per-chirp sequence of time–frequency pairs $\{(f_k^i,t_k^i)\}_{k=1}^{K^i}$.
In essence, mixed strategies generate diverse chirp sequences that sustain target detection while mitigating mutual interference.

Let $U_i(\sigma^i_\tau,\sigma^{-i}_\tau)$ be radar $i$’s CPI utility \eqref{eq:utility}, computed from the RX pipeline as $u( 10\log_{10}(\mathrm{SINR}^i_\tau) )$, where the superscript $-i$ denotes the strategy profile of all players except player~$i$, i.e., $\sigma^{-i}_\tau := (\sigma^j_\tau)_{j \in \Ical \setminus \{i\}}$.
The \emph{expected} CPI utility under independent mixed strategies $p^i_\tau$ and $p^{-i}_\tau$ is
\begin{equation}
\label{eq:exp-utility}
\bar U_i(p^i_\tau,p^{-i}_\tau)
=
\mathbb{E}_{\sigma^i_\tau\sim p^i_\tau,\ \sigma^{-i}_\tau\sim p^{-i}_\tau}
\big[U_i(\sigma^i_\tau,\sigma^{-i}_\tau)\big].
\end{equation}
Each radar updates $p^i_\tau$ online, from local measurements only, to maximize its long-term utility. In classical noncooperative game theory, when players act independently according to their own mixed strategies, the system may reveal a \emph{Nash equilibrium} (NE)\cite{bacsar1998dynamic}.
\begin{definition}\label{def:ne} 
    For the $\mathcal{T}$-stage repeated game $\Gamma = \{ \Ical, \Sigma, \mathcal{G}, \{U_i\}_{i \in \Ical}\}$, a joint strategy profile $(p^{1\*},\ldots,p^{I\*})$ constitutes a Nash equilibrium if no radar can unilaterally improve its expected utility, i.e.,
\begin{equation}
\label{eq:NE}
    \bar U_i(p^{i *},p^{-i *})
    \ge
    \bar U_i(p^i,p^{-i *}), 
    \quad \forall p^i\in\Delta(\Sigma^i),\ \forall i\in\Ical.
\end{equation} 
\end{definition}
Intuitively, under an NE each radar’s mixed strategy is an optimal response to the others’ strategies, and no player has incentive to deviate.
However, in radar interference scenarios, purely independent mixed strategies are often insufficient to ensure interference-free operation due to a variety of reasons discussed in \cite{pan2025game}. 
Even when a Nash equilibrium exists, it often suffers from several fundamental limitations:
(i) an equilibrium selection problem, since multiple equilibria may coexist and independent learners can converge to inconsistent or colliding strategies;
(ii) suboptimality, as individual optimality does not imply maximal collective performance, often resulting in degraded overall SINR and inefficient spectrum utilization; and
(iii) computational intractability, as finding a Nash equilibrium is known to be a PPAD-complete problem~\cite{thecomplexityofne}.

\subsection{Correlated Equilibria and Coarse Correlated Equilibria}

To overcome these limitations, a richer notion of equilibrium allows players’ actions to be statistically \emph{correlated} through shared experience or endogenous signals.
In this setting, the joint distribution over strategies, denoted by $\pi \in\Delta(\prod_{i \in \Ical}\Sigma^i)$, need not decompose into a product of marginals. \emph{Correlated strategies} capture coordination patterns that emerge through repeated interaction and learning, enabling the system to approach equilibria with higher collective utility. Rather than assuming independent sampling of strategies, we let each radar draw its strategy recommendation from $\pi$ and can decide whether to follow it or deviate.
This framework generalizes the aforementioned setting, which corresponds to the special case in which $\pi$ factorizes as the product of independent marginals.

\begin{definition}
For the $\mathcal{T}$-stage repeated game $\Gamma = \{ \Ical, \{ \Sigma^i \}_{i \in \Ical }, \mathcal{G}, \{U_i\}_{i \in \Ical}\}$, 
let $\pi \in \Delta(\prod_{i\in\Ical}\Sigma^i)$ denote a joint probability distribution over the players’ strategy profiles.
Then:
\begin{itemize}
    \item[a)] $\pi$ is a {Correlated Equilibrium (CE)} if, for every radar $i \in \Ical$ and for all $\sigma^i, \sigma^{i\prime} \in \Sigma^i$, 
    \begin{equation}\label{eq:ce}
        \mathbb{E}_{\sigma^{-i} \sim \pi(\cdot \mid \sigma^i)}
        \left[ U_i(\sigma^i,\sigma^{-i}) \right]
        \ge
        \mathbb{E}_{\sigma^{-i} \sim \pi(\cdot \mid \sigma^i)}
        \left[ U_i(\sigma^{i\prime},\sigma^{-i}) \right].
    \end{equation}
    That is, once a joint strategy $\sigma$ is drawn from $\pi$ and radar $i$ learns its own recommendation $\sigma^i$, it cannot improve its expected utility by deviating to another strategy $\sigma^{i\prime}$.

    \item[b)] $\pi$ is a {Coarse Correlated Equilibrium (CCE)} if, for every radar $i \in \Ical$ and for all $\sigma^{i\prime} \in \Sigma^i$, 
    \begin{equation}\label{eq:cce}
        \mathbb{E}_{\sigma \sim \pi}
        \left[ U_i(\sigma^i,\sigma^{-i}) \right]
        \ge
        \mathbb{E}_{\sigma \sim \pi}
        \left[ U_i(\sigma^{i\prime},\sigma^{-i}) \right].
    \end{equation}
    In this case, radar $i$ decides whether to deviate \emph{before} seeing its recommended strategy, and such deviation cannot increase its expected utility.
\end{itemize}
\end{definition}

A CCE is a relaxation of CE in which deviation decisions are made \emph{before} observing any recommendation. 
Hence, every CE is a CCE, but the converse does not always hold; the set of CCE is larger and typically easier to approach in practice.  

In general, computing CE or CCE directly requires solving systems of linear inequalities of the form~\eqref{eq:ce}–\eqref{eq:cce}, whose size grows exponentially with the number of radars and the complexity of their strategy spaces~\cite{roughgarden2010algorithmic}. 
However, centralized planning is infeasible in the radar-to-radar interference setting as it requires information exchange infrastructure. 
Therefore, we resort to \emph{no-regret learning} to obtain these equilibria, where each radar independently updates its mixed strategy based on local utility feedback.

\begin{definition}
   The external regret for radar $i$ after $\mathcal{T}$ CPIs is defined, given a sequence of played strategy profiles $\{\sigma^i_{\tau}, \sigma^{-i}_{\tau}\}_{\tau = 1}^{\mathcal{T}}$, as:
\begin{equation}
\label{eq:external_regret}
\mathcal{R}_{\mathrm{ext}}^i(\mathcal{T})
=\max_{\sigma^{i,\star} \in \Sigma^i}
\sum_{\tau=1}^{\mathcal{T}}
\Big[
U_i(\sigma^{i,\star}, \sigma^{-i}_\tau)
-
U_i(\sigma^i_\tau, \sigma^{-i}_\tau)
\Big].
\end{equation}
\end{definition}
 External regret measures how much worse a radar performs compared to the best constant strategy in hindsight.
A learning process achieves \emph{no external regret} if $\mathcal{R}_{\mathrm{ext}}^i(\mathcal{T})/\mathcal{T} \to 0$ as $\mathcal{T}\to\infty$, meaning radar $i$ performs asymptotically at least as well as the best constant strategy in hindsight.

The notion of \emph{internal regret}, also known as \emph{swap regret}, refines and generalizes this benchmark by considering deviations from each played strategy to alternative ones according to a deterministic \emph{swap mapping} (also known as a \emph{modification rule}).

\begin{definition}
    Let $\phi^i: \Sigma^i \to \Sigma^i$ be a mapping that prescribes, for every strategy $\sigma^i \in \Sigma^i$, an alternative strategy $\phi^i(\sigma^i)$ to which radar $i$ would have switched whenever $\sigma^i$ was selected.
The cumulative internal (swap) regret of radar $i$ after $\mathcal{T}$ CPIs is defined, given a sequence of fixed strategy profiles $\{\sigma^i_{\tau}, \sigma^{-i}_{\tau}\}_{\tau = 1}^{\mathcal{T}}$, as
\begin{equation}
\label{eq:swap_regret}
\mathcal{R}_{\mathrm{int}}^i(\mathcal{T})
=\max_{\phi^i:\Sigma^i\to\Sigma^i}
\sum_{\tau=1}^{\mathcal{T}}
\Big[
U_i(\phi^i(\sigma^i_\tau), \sigma^{-i}_\tau)
-
U_i(\sigma^i_\tau, \sigma^{-i}_\tau)
\Big].
\end{equation}
\end{definition}
A learning process is said to have \emph{no internal regret} if $\mathcal{R}_{\mathrm{int}}^i(\mathcal{T})/\mathcal{T} \to 0$ as $\mathcal{T} \to \infty$.
Intuitively, this means that no systematic replacement rule (swapping each previously played strategy $\sigma^i$ with another $\phi^i(\sigma^i)$) would have yielded higher average utility. 

\subsection{No-Regret Learning }

A key result from learning in games~\cite{roughgarden2010algorithmic,hart2000simple,cesa2006prediction}
establishes a fundamental bridge between regret minimization and equilibrium concepts. 
When players iteratively update their mixed strategies to minimize regret, 
the empirical distribution of play converges to an equilibrium notion consistent with the form of regret being minimized.
Specifically, minimizing \emph{external regret} leads to convergence to a \emph{coarse correlated equilibrium} (CCE),
while minimizing the stronger \emph{internal (swap) regret} yields convergence to a \emph{correlated equilibrium} (CE).

Let the empirical joint distribution of strategies after $\mathcal{T}$ \gls{cpi} be
\begin{equation}\label{eq:emp}
    \bar{\pi}(\sigma) 
    := \frac{1}{\mathcal{T}}\sum_{\tau=1}^{\mathcal{T}}\mathds{1}\{\sigma_\tau=\sigma\},
    \qquad 
    \sigma=(\sigma^1,\ldots,\sigma^I)\in\prod_{i\in\Ical}\Sigma^i,
\end{equation}
where $\sigma_\tau$ denotes the joint strategy profile played at CPI $\tau$.

\begin{lemma}\label{lemma:cenoregret}
Suppose each radar $i\in\Ical$ achieves average internal (swap) regret at most $\varepsilon$, i.e.,
\begin{equation}\label{eq:swap_regret_bound}
\frac{1}{\mathcal{T}}
\max_{\phi^i:\Sigma^i\to\Sigma^i}
\sum_{\tau=1}^{\mathcal{T}}
\Big[
U_i(\phi^i(\sigma^i_\tau),\sigma^{-i}_\tau)
-U_i(\sigma^i_\tau,\sigma^{-i}_\tau)
\Big]
\le \varepsilon,
\end{equation}
where $\phi^i$ is a \emph{swap mapping} prescribing, for each strategy $\sigma^i$, an alternative $\phi^i(\sigma^i)$. 
Then, the empirical distribution $\bar{\pi}$ is an $\varepsilon$-correlated equilibrium.
\end{lemma}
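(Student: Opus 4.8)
The plan is to unfold the swap-regret bound \eqref{eq:swap_regret_bound}, regroup the time sum by the recommended strategy, and match the result term-by-term against the correlated-equilibrium inequality \eqref{eq:ce}. The key starting observation is that \eqref{eq:swap_regret_bound} holds for \emph{every} swap mapping $\phi^i:\Sigma^i\to\Sigma^i$, in particular for the elementary maps that deviate on a single strategy. I would therefore fix an arbitrary pair $s,s'\in\Sigma^i$ and a radar $i\in\Ical$, and consider the single-deviation map $\phi^i_{s\to s'}$ that sends $s\mapsto s'$ and acts as the identity on all other strategies.

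For this particular map every summand with $\sigma^i_\tau\neq s$ vanishes, so the cumulative swap-regret expression collapses and, being only one admissible choice in the maximization defining the swap regret, is bounded above by $\varepsilon\,\mathcal{T}$:
\begin{equation*}
\sum_{\tau=1}^{\mathcal{T}} \big[ U_i(\phi^i_{s\to s'}(\sigma^i_\tau),\sigma^{-i}_\tau) - U_i(\sigma^i_\tau,\sigma^{-i}_\tau)\big]
= \sum_{\tau\,:\,\sigma^i_\tau = s} \big[ U_i(s',\sigma^{-i}_\tau) - U_i(s,\sigma^{-i}_\tau)\big] \le \varepsilon\,\mathcal{T}.
\end{equation*}

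Next I would translate the left-hand side into the empirical distribution \eqref{eq:emp}. Grouping the CPIs by the realized opponent profile $\sigma^{-i}$ and using $\bar{\pi}(s,\sigma^{-i}) = \frac{1}{\mathcal{T}}\sum_{\tau}\mathds{1}\{\sigma^i_\tau=s,\ \sigma^{-i}_\tau=\sigma^{-i}\}$, dividing through by $\mathcal{T}$ yields
\begin{equation*}
\sum_{\sigma^{-i}} \bar{\pi}(s,\sigma^{-i})\big[U_i(s',\sigma^{-i}) - U_i(s,\sigma^{-i})\big] \le \varepsilon.
\end{equation*}
Since $s,s'$ and $i$ were arbitrary, this un-normalized inequality is exactly the additive $\varepsilon$-correlated-equilibrium condition for $\bar{\pi}$, and it collapses to the conditional form \eqref{eq:ce} when $\varepsilon=0$.

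The two steps that need the most care are the reduction and the normalization. The reduction from general swap maps to single deviations is legitimate precisely because the maximum over all $\phi^i$ dominates the value of any fixed $\phi^i$, so one tests the inequality against the elementary maps one pair at a time, with no need to decompose the optimal swap map. The more delicate point is whether to state the guarantee in normalized or un-normalized form: I would report the joint (un-normalized) inequality above, which remains well-defined and vacuously valid for strategies $s$ that are never recommended, where the empirical marginal $\bar{\pi}^i(s)=\sum_{\sigma^{-i}}\bar{\pi}(s,\sigma^{-i})$ is zero. Passing to the conditional expectation $\mathbb{E}_{\sigma^{-i}\sim\bar{\pi}(\cdot\mid s)}[\cdot]$ on the support of $\bar{\pi}^i$ merely rescales the slack to $\varepsilon/\bar{\pi}^i(s)$, so the cleanest statement of the $\varepsilon$-CE guarantee, and the one I expect to be the main obstacle to phrase correctly, is the un-normalized one.
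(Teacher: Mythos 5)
Your proposal is correct and takes essentially the same route as the paper's proof: both instantiate the swap-regret bound with the elementary single-deviation map $\phi^i_{s\to s'}$ (identity off $s$), rewrite the collapsed sum as an expectation under the empirical distribution $\bar{\pi}$, and then observe that the conditional form of the $\varepsilon$-CE inequality follows by dividing by the marginal $\bar{\pi}^i(s)$ when it is nonzero, with slack $\varepsilon/\bar{\pi}^i(s)$. Your explicit remark that the un-normalized (weighted) inequality is the robust statement---vacuously valid for never-recommended strategies---matches the paper's handling of the same point and contains no gap.
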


\begin{proof}
Fix $i\in\Ical$ and two strategies $\hat{\sigma}^i,\tilde{\sigma}^i\in\Sigma^i$.  
Define a pairwise swap mapping 
$\phi^i_{\hat{\sigma}^i\to\tilde{\sigma}^i}(\sigma^i)
= \tilde{\sigma}^i$ if $\sigma^i=\hat{\sigma}^i$ and $\sigma^i$ otherwise.
Substituting $\phi^i_{\hat{\sigma}^i\to\tilde{\sigma}^i}$ into~\eqref{eq:swap_regret_bound} yields
\begin{equation*}
\frac{1}{\mathcal{T}}\sum_{\tau=1}^{\mathcal{T}}
\mathds{1}\{\sigma^i_\tau=\hat{\sigma}^i\}
\Big[U_i(\hat{\sigma}^i,\sigma^{-i}_\tau)-U_i(\tilde{\sigma}^i,\sigma^{-i}_\tau)\Big]
\ge -\varepsilon.
\end{equation*}
Recognizing the left-hand side as an expectation under $\bar{\pi}$ gives~\eqref{eq:CE_weighted}.  
\begin{equation}\label{eq:CE_weighted}
\mathbb{E}_{\sigma\sim\bar{\pi}}
\Big[
\mathds{1}\{\sigma^i = \hat{\sigma}^i\}\big(U_i(\sigma^i,\sigma^{-i})-U_i(\tilde{\sigma}^i,\sigma^{-i})\big)
\Big]\ge -\varepsilon,
\end{equation}
Dividing by $\bar{\pi}(\sigma^i)$ (if nonzero) yields the conditional form~\eqref{eq:CE_conditional}. 
\begin{equation}\label{eq:CE_conditional}
\mathbb{E}_{\sigma\sim\bar{\pi}(\cdot\mid\sigma^i)}\big[U_i(\sigma^i,\sigma^{-i})\big]
\ge
\mathbb{E}_{\sigma \sim\bar{\pi}(\cdot\mid\sigma^i)}\big[U_i(\sigma^{i\prime},\sigma^{-i})\big]
-\frac{\varepsilon}{\bar{\pi}(\sigma^i)}.
\end{equation}
Since this holds for all $i$, $\sigma^i$, and $\sigma^{i\prime}$, $\bar{\pi}$ satisfies the $\varepsilon$-CE inequalities.
\end{proof}

\begin{corollary}\label{cor:cce}
External regret is a special case of internal (swap) regret, obtained by restricting the swap mapping to a constant function,
\begin{equation*}
\phi^i(\sigma^i) \equiv \sigma^{i\prime}, \quad \forall \sigma^i \in \Sigma^i.
\end{equation*}
Hence, if each radar achieves vanishing external regret, the empirical distribution $\bar{\pi}$ satisfies the relaxed inequalities of an $\varepsilon$-\emph{coarse correlated equilibrium} (CCE):
\begin{equation*}
\mathbb{E}_{\sigma\sim\bar{\pi}}[U_i(\sigma^i,\sigma^{-i})]
\ge
\mathbb{E}_{\sigma\sim\bar{\pi}}[U_i(\sigma^{i\prime},\sigma^{-i})]
-\varepsilon, \quad \forall i\in\Ical.
\end{equation*}
Consequently, minimizing swap regret implies minimizing external regret and guarantees convergence to both CE and CCE.
\end{corollary}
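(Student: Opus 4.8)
The plan is to establish the corollary in three short movements, each leaning on the definitions in \eqref{eq:external_regret} and \eqref{eq:swap_regret} together with \Cref{lemma:cenoregret}. First I would verify the embedding claim: fixing any $\sigma^{i\prime}\in\Sigma^i$ and taking the constant swap mapping $\phi^i(\sigma^i)\equiv\sigma^{i\prime}$, the per-stage summand in \eqref{eq:swap_regret} collapses to $U_i(\sigma^{i\prime},\sigma^{-i}_\tau)-U_i(\sigma^i_\tau,\sigma^{-i}_\tau)$, which is exactly the summand defining external regret in \eqref{eq:external_regret}. Since the constant mappings form a subset of all maps $\Sigma^i\to\Sigma^i$, maximizing over this subset cannot exceed maximizing over the full set, so $\mathcal{R}_{\mathrm{ext}}^i(\mathcal{T})\le\mathcal{R}_{\mathrm{int}}^i(\mathcal{T})$ for every $\mathcal{T}$.

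Second, I would translate a vanishing external-regret hypothesis directly into the CCE inequalities, mirroring the argument in \Cref{lemma:cenoregret} but without the conditioning step. Assuming $\mathcal{R}_{\mathrm{ext}}^i(\mathcal{T})/\mathcal{T}\le\varepsilon$, the defining maximum in \eqref{eq:external_regret} implies that for every fixed $\sigma^{i\prime}\in\Sigma^i$ the time-averaged gain is bounded, $\frac{1}{\mathcal{T}}\sum_{\tau=1}^{\mathcal{T}}[U_i(\sigma^{i\prime},\sigma^{-i}_\tau)-U_i(\sigma^i_\tau,\sigma^{-i}_\tau)]\le\varepsilon$. The key reinterpretation is that the empirical joint distribution \eqref{eq:emp} renders each time average an expectation, so that $\frac{1}{\mathcal{T}}\sum_\tau U_i(\sigma^i_\tau,\sigma^{-i}_\tau)=\mathbb{E}_{\sigma\sim\bar\pi}[U_i(\sigma^i,\sigma^{-i})]$ and likewise for the deviation term; rearranging then yields precisely the $\varepsilon$-CCE inequality stated in the corollary.

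Finally, I would assemble the convergence conclusion. The dominance $\mathcal{R}_{\mathrm{ext}}^i\le\mathcal{R}_{\mathrm{int}}^i$ shows that any algorithm driving average swap regret below $\varepsilon$ simultaneously drives average external regret below $\varepsilon$; invoking \Cref{lemma:cenoregret} certifies $\bar\pi$ as an $\varepsilon$-CE, and since every CE is a CCE (the inclusion recorded immediately after the CE/CCE definition), $\bar\pi$ is also an $\varepsilon$-CCE.

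I do not anticipate a genuine obstacle here, as the statement is definitional in character and is a direct consequence of \Cref{lemma:cenoregret}. The only point requiring care is the bookkeeping in the embedding step---ensuring that the maximization over constant mappings is identified cleanly with the external-regret maximization over $\sigma^{i\prime}$---and, in the second step, applying the empirical-average-to-expectation identity to both utility terms under the \emph{same} draw $\sigma\sim\bar\pi$ so that the marginal over $\sigma^{-i}$ remains consistent across the two sides of the inequality.
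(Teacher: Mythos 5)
Your proposal is correct and follows essentially the same route the paper intends: the paper offers no separate proof for Corollary~\ref{cor:cce} precisely because its justification is the restriction of the swap mapping in \eqref{eq:swap_regret} to constant functions (yielding $\mathcal{R}_{\mathrm{ext}}^i(\mathcal{T})\le\mathcal{R}_{\mathrm{int}}^i(\mathcal{T})$) combined with rewriting the time averages as expectations under the empirical distribution $\bar{\pi}$ of \eqref{eq:emp}, exactly as in your second step. Your added care that both utility terms are expectations under the same draw $\sigma\sim\bar{\pi}$ (the deviation term depending only on the $\sigma^{-i}$ marginal) is the right bookkeeping and consistent with the argument sketched in \Cref{lemma:cenoregret}.
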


In the context of radar interference avoidance, these results imply that when each radar locally minimizes its own regret, the empirical average of collective radar behavior asymptotically converges to a CE—a decentralized operating regime that balances interference avoidance with detection performance without side-channel coordination.

\subsection{Algorithmic Design under Semi-Bandit Feedback}

\begin{algorithm*}[ht]
\caption{\texttt{No-Regret Hopping} for radar $i\in\Ical$}\label{algo:noregrettransmit}
\begin{algorithmic}[1]
\Require initial strategy $p^{i}_{1}=\mathrm{Unif}(\Sigma^{i})$; utility buffer $\mathcal{U}\leftarrow\varnothing$; block length $\ell$; number of chirps $K^i$, parameters $\{ \eta_\tau, \gamma_\tau \}_{\tau=1}^{\mathcal{T}}$ 

\For{$\tau=1:\mathcal{T}$}
    \State {Transmission scheduling:} 
     \hspace{\algorithmicindent}
        $(f^{i}_{k},t^{i}_{k})_{k\in\boldsymbol{\tau}}
        \leftarrow \texttt{Stochastic Round-Robin}\big(p^{i}_{\tau}, \ell, K^i\big)$
        \Comment{(\emph{ Subroutine~\ref{algo:sub1}})}
    \State Sweep subband $[f^{i}_{k},f^{i}_{k}+B]$ over $[t^{i}_{k},  t^{i}_{k}+T^{i}_{a}]$ for each chirp $k\in\boldsymbol{\tau}$
    \State {Local feedback:}
        estimate $\overline{\mathrm{SINR}}^{i}_{\tau}(f,t)$ (and $\overline{\mathrm{SNR}}^{i}_{\tau}(f,t)$ when available) for all $(f,t)$ used
    \State Form CPI utility $\bar U^{i}_{\tau}$ from \gls{rx} pipeline (cf. \eqref{eq:utility}), $\mathcal{U} = \mathcal{U} \bigcup \{  \bar U^{i}_{\tau} \}$
    \State {Strategy update:} \hspace{\algorithmicindent}
        $\left(p^{i}_{\tau+1}, \Theta^i_{\tau+1} \right) \leftarrow\texttt{Regret-Minimization}\big(\mathcal{U},  p^{i}_{\tau}, \Theta^i_{\tau}, \eta_\tau, \gamma_\tau \big)$
         \Comment{(\emph{Subroutine~\ref{algo:sub2_ext} or \ref{algo:sub2_int}})}
\EndFor
\end{algorithmic}
\end{algorithm*}

In this subsection, we develop the decentralized learning procedure from the perspective of an individual radar. 
Each radar observes only its own noisy measurement outcomes and lacks access to the global joint utility. 
Consequently, the learning process follows a form of \emph{semi-bandit feedback}, 
where the local utility of a strategy is inferred from locally observable signal statistics obtained at the end of every CPI. It is termed {semi-bandit} because the observation from a played strategy can indirectly reveal information about the utilities of other, unplayed strategies.

We propose Algorithm~\ref{algo:noregrettransmit} as the CPI-level driver that each radar executes online, which unfolds into three main procedures.
\paragraph{Transmission Scheduling}
At the beginning of CPI~$\tau$, radar $i$ holds a mixed strategy $p^i_\tau\in\Delta(\Sigma^i)$ over strategy space.
Subroutine~\ref{algo:sub1} (\emph{Stochastic Round–Robin}) then instantiates the concrete per–chirp sequence $\{(f^i_k,t^i_k)\}_{k\in\boldsymbol{\tau}}$ by sampling a starting pair once per \emph{block of chirps} and then cycling through indices within that block. 
This meta routine preserves temporal diversity within a CPI, and gives freedom for the design of the transmission subroutine, which seeks to diversify chirps but also avoid collisions.

\floatname{algorithm}{Subroutine}
\begin{algorithm}[ht]
\caption{\texttt{Stochastic Round-Robin}}\label{algo:sub1}
\begin{algorithmic}[1]
\Require from $p^i_\tau$ get sampler $q^i_\tau$ over  $(f,t)\in\mathcal{A}$ for CPI $\tau$ at radar $i$; block length $\ell$; number of chirps $K^i$
\For{$b=1$ to $\lceil K^i / \ell \rceil$}
    \State Sample starting pair $(f_s,t_s)\sim q^i_\tau(\cdot)$
    \For{$m=0$ to $\ell-1$}
        \State $k \leftarrow (b-1)\ell + m + 1$
        \State $f^i_k \leftarrow f_{(s+m)\bmod A_f}$,\quad
               $t^i_k \leftarrow t_{(s+m)\bmod A_t}$
    \EndFor
\EndFor
\State \textbf{return} $(f^i_k,t^i_k)_{k = 1}^{K^i}$
\end{algorithmic}
\end{algorithm}

\paragraph{Local Feedback and Signal Processing}
At the end of each CPI, the receiver computes statistics $\overline{\mathrm{SINR}}^i_\tau(f,t)$ (and $\overline{\mathrm{SNR}}^i_\tau$ when no interference is detected), and aggregates them into a bounded CPI utility $\bar U^i_\tau\in[0,1]$, appended to the history $\mathcal{U}$.
Specifically, we assume that at the end of CPI~$\tau$, radar~$i$ computes its signal statistics from the down-sampled and low-pass filtered ADC matrix $\mathbf{Y}^i_\tau$, whose columns $\mathbf{Y}^i_{\tau,k}$ correspond to the received baseband signal of chirp~$k$. 
This correspondence allows the radar to associate each chirp with its transmitted time–frequency configuration $(f^i_k,t^i_k)$ and thus estimate, for every pair $(f,t)$, the empirical \gls{sinr} as
\begin{equation} \label{eq:sinr}
\overline{\mathrm{SINR}}^i_\tau(f, t)
=
\dfrac{ \sum_{k }
\frac{\mathrm{P}(\tilde{\mathbf{Y}}^i_{\tau, k} ) \mathds{1}\{f^i_k=f,  t^i_k=t\}}
{\mathrm{P}( \sum_{o \in \mathbf{N}(i)}\tilde{\mathbf{Y}}^o_{\tau, k} )+\mathrm{P}(\mathbf{E}^i_k)}}
{\sum_{k}\mathds{1}\{f^i_k=f,  t^i_k=t\}},
\end{equation}
where $\mathbf{E}^i_{\tau,k}$ denotes the receiver noise component and $\tilde{\mathbf{Y}}^o_{\tau,k}$ represents the interference signals identified from neighboring radars $o\in\mathbf{N}(i)$. 
We further assume that  
(i)~the average noise power can be estimated as $N_0=k_B T_0 F B_{\text{IF}}$, where $k_B$ is the Boltzmann constant, $T_0 = 290$~K is the standard reference temperature, $F$ is the receiver noise figure, and $B_{\text{IF}}$ is the IF bandwidth; and  
(ii)~interference components are detectable and separable by standard interference-identification techniques such as time-domain thresholding~\cite{bechter2017automotive_intf}, STFT-based analysis~\cite{uysal2020phase_coded_fmcw}, or CFAR detection~\cite{jin2019adaptive_noise_canceller}, enabling the estimation of $\mathrm{P}(\sum_{o\in\mathbf{N}(i)}\tilde{\mathbf{Y}}^o_{\tau,k})$. In our simulation, we implement the STFT-based thresholding and cell-averaging CFAR pipeline of~\cite{li2024performance} for per-chirp SINR estimation, with parameters: Hamming window length 64, overlap 48 samples, 64-point FFT, interference threshold factor $10^3$ (median + MAD), CA-CFAR with 2 guard cells, 12 training cells per side, and $P_{fa} = 10^{-3}$. We note that imperfect interference detection would degrade the utility estimate quality, slowing but not preventing convergence of the no-regret algorithm.

When interference-free chirps are detected, the corresponding signal-to-noise ratio (SNR) can be estimated as
\begin{equation}\label{eq:snr}
\overline{\mathrm{SNR}}^i_\tau(f, t)
=
\dfrac{\sum_{k}
\frac{\mathrm{P}(\mathbf{Y}^i_{\tau, k} )\mathds{1}\{f^i_k=f,  t^i_k=t,  \tilde{\mathbf{Y}}^i_{\tau, k}=\mathbf{Y}^i_{\tau, k}\}}
{\mathrm{P}(\mathbf{E}^i_{\tau, k})}}
{\sum_{k}\mathds{1}\{f^i_k=f,  t^i_k=t,  \tilde{\mathbf{Y}}^i_{\tau, k}=\mathbf{Y}^i_{\tau, k}\}}.
\end{equation}
These empirical SINR and SNR estimates constitute the observable performance feedback for radar~$i$ and serve as the input for its utility estimation and subsequent mixed-strategy updates.

\paragraph{Mixed-strategy update and regret minimization}
Subroutine~\ref{algo:sub2_ext} implements entropic Online Mirror Descent (OMD) \cite{pan2024variationalinterpretationmirrorplay,pan-tao23delay,shutian23erm} with a \emph{bandit} (one–point) utility estimator on the strategy space $\Sigma^i$.
Since each CPI samples exactly one pure strategy $\sigma^i_\tau\sim p^i_\tau$, we form the importance–weighted estimate
$\widehat U^i_\tau(\sigma)=\bar U^i_\tau/p^i_\tau(\sigma^i_\tau)$ if $\sigma=\sigma^i_\tau$ and $0$ otherwise, 
which is unbiased for the linearized objective on $\Delta(\Sigma^i)$.
The dual states $\Theta^i$ are updated with stepsize $\eta_\tau$ and projected back to the simplex by a softmax; a small exploration mass $\gamma_\tau$ keeps all actions played with nonzero probability (controlling estimator variance and ensuring well–posedness).

\begin{algorithm}[htbp]
\caption{\texttt{Regret-Minimization (External)}}\label{algo:sub2_ext}
\begin{algorithmic}[1]
\Require current strategy $p^i_\tau\in\Delta(\Sigma^i)$; CPI utility history $\mathcal{U}$; stepsize $\eta_\tau>0$; exploration $\gamma_\tau\in(0,1)$; \emph{dual state} $\Theta^i_\tau$ storing scores $z^i_\tau(\sigma)$ for all $\sigma\in\Sigma^i$

\State Estimation: for every $\sigma \in \Sigma^i$
\begin{align*}
\widehat{U}^i_\tau(\sigma)\leftarrow
\begin{cases}
\displaystyle \frac{\bar{U}^i_\tau}{p^i_\tau(\sigma)} \mathds{1} \{ \sigma=\sigma^i_\tau \}, &  \gamma_\tau \neq 0 \\
1 - \frac{1- \bar{U}^i_\tau}{p^i_\tau(\sigma)} \mathds{1} \{ \sigma=\sigma^i_\tau \},& \gamma_\tau = 0 .
\end{cases}
\end{align*}

\State Update: $$z^i_{\tau+1}(\sigma)\leftarrow z^i_\tau(\sigma)+\eta_\tau  \widehat{U}^i_\tau(\sigma) \quad \forall\sigma\in\Sigma^i. $$
 {Exploitation + exploration:}
\begin{align*}
    \tilde p^i_{\tau+1}(\sigma) & =\frac{\exp\{z^i_{\tau+1}(\sigma)\}}{\sum_{\sigma'}\exp\{z^i_{\tau+1}(\sigma')\}},\\
p^i_{\tau+1}(\sigma) & =(1-\gamma_\tau)  \tilde p^i_{\tau+1}(\sigma)+\frac{\gamma_\tau}{|\Sigma^i|}.
\end{align*}

\State \textbf{Return} $p^i_{\tau+1}$ and $\Theta^i_{\tau+1}\equiv\{z^i_{\tau+1}(\sigma)\}_{\sigma\in\Sigma^i}$
\end{algorithmic}
\end{algorithm}

Subroutine~\ref{algo:sub2_int} upgrades the guarantee from CCE to CE by minimizing \emph{swap} regret on $\Delta(\Sigma^i)$.
It maintains, for every source pure strategy $\sigma\in\Sigma^i$, a row-mixed-strategy $q^i_{\sigma}(\cdot)\in\Delta(\Sigma^i)$ that encodes how probability mass on $\sigma$ would be \emph{relabelled} to any alternative $\sigma'$.
A single CPI reward $\bar U^i_\tau$ yields the same importance–weighted estimator $\widehat U^i_\tau(\cdot)$ as above; then each row is updated by row–wise entropic OMD with a $p^i_\tau(\sigma)$ weighting, which is the canonical construction for no–swap–regret.
Collecting the rows forms a stochastic matrix $Q^i_{\tau+1}$; the next mixed strategy $p^i_{\tau+1}$ is its left stationary distribution, i.e., the solution to $p^i_{\tau+1}=p^i_{\tau+1}Q^i_{\tau+1}$, $\sum_\sigma p^i_{\tau+1}(\sigma)=1$.

\begin{algorithm}[ht]
\caption{\texttt{Regret-Minimization (Internal)}}\label{algo:sub2_int}
\begin{algorithmic}[1]
\Require current strategy $p^i_\tau\in\Delta(\Sigma^i)$; CPI utility history $\mathcal{U}$; stepsize $\eta_\tau$; exploration $\gamma_\tau$; \emph{dual state} $\Theta^i_\tau$ storing row scores $z^i_{\tau,\sigma}(\cdot)$ and row-strategies $q^i_{\tau,\sigma}(\cdot)$ for each $\sigma\in\Sigma^i$.

\State {Estimation:} for every $\sigma \in \Sigma^i$
\begin{align*}
\widehat{U}^i_\tau(\sigma)\leftarrow
\begin{cases}
\displaystyle \frac{\bar{U}^i_\tau}{p^i_\tau(\sigma)} \mathds{1} \{ \sigma=\sigma^i_\tau \}, &  \gamma_\tau \neq 0 \\
1 - \frac{1- \bar{U}^i_\tau}{p^i_\tau(\sigma)} \mathds{1} \{ \sigma=\sigma^i_\tau \},& \gamma_\tau = 0 .
\end{cases}
\end{align*}

\State {Row-wise update:} For each source $\sigma\in\Sigma^i$ and destination $\sigma'\in\Sigma^i$,
\begin{align*}
z^i_{\tau+1,\sigma}(\sigma')\leftarrow
z^i_{\tau,\sigma}(\sigma')
+\eta_\tau  p^i_\tau(\sigma)
\widehat{U}^i_\tau(\sigma^{\prime}).
\end{align*}

\State Exploration + Exploitation:
\begin{align*}
\tilde q^i_{\tau+1,\sigma}(\sigma')
&=\frac{\exp\left\{ z^i_{\tau+1,\sigma}(\sigma') \right\}}
{\sum_{\rho\in\Sigma^i}\exp\left\{  z^i_{\tau+1,\sigma}(\rho)  \right\}},\\
q^i_{\tau+1,\sigma}(\sigma')
&=(1-\gamma_\tau)  \tilde q^i_{\tau+1,\sigma}(\sigma')
+\frac{\gamma_\tau}{|\Sigma^i|}.
\end{align*}

\State {Assemble transition matrix:} $Q^i_{\tau+1}[\sigma,\cdot]=q^i_{\tau+1,\sigma}(\cdot),$
\begin{align*}
p^i_{\tau+1}=p^i_{\tau+1}  Q^i_{\tau+1},
\quad
\sum_{\sigma\in\Sigma^i}p^i_{\tau+1}(\sigma)=1.
\end{align*}

\State \textbf{Return}
$p^i_{\tau+1}$ and
$\Theta^i_{\tau+1}\equiv\{z^i_{\tau+1,\sigma},  q^i_{\tau+1,\sigma}\}_{\sigma\in\Sigma^i}$.
\end{algorithmic}
\end{algorithm}
\floatname{algorithm}{Algorithm}


In summary, Algorithm~\ref{algo:noregrettransmit} proceeds once per CPI by (i) \emph{generating} a time–frequency sequence via Subroutine~\ref{algo:sub1} and then (ii) performing a \emph{single} regret-based mixed-strategy update via Subroutine~\ref{algo:sub2_ext} (CCE), Subroutine~\ref{algo:sub2_int} (CE), or any other plug-in no-regret routine.
All learning occurs on the mixed-strategy simplex $\Delta(\Sigma^i)$, keeping computation light and enabling clean equilibrium statements.
The stochastic round-robin sampler bridges the low-dimensional mixed-strategy (sampling starting frequency-time pairs) to physically realizable chirp schedules; hence our convergence guarantees apply to the finite game restricted to the mixed sequence strategies implementable by this sampler, faithfully reflecting the \gls{tx}/\gls{rx} architecture and the per-CPI feedback available at each radar.

\subsubsection{Computational cost and real-time feasibility}
The per-CPI strategy update is lightweight because the action space is small ($A_f \times A_t = 21$). In our implementation, the external update (Subroutine~\ref{algo:sub2_ext}) accumulates action scores over the observed CPI history, while the internal update (Subroutine~\ref{algo:sub2_int}) forms a small group-level regret matrix; both operate on low-dimensional vectors and matrices and are negligible compared with receiver-side range-Doppler FFT, STFT-based interference detection, and CFAR processing that every radar already performs. The stochastic round-robin sampling (Subroutine~\ref{algo:sub1}) draws $\lceil K/\ell \rceil$ random samples per CPI. No inter-radar communication is required.

The update executes once between consecutive CPIs. With $K = 256$ and $T_{\text{PRI}} = 29.99\,\mu\text{s}$, one CPI lasts approximately 7.68\,ms, so convergence in ${\sim}10$ CPIs corresponds to ${\sim}77$\,ms of real time. At a closing speed of 260\,km/h, this amounts to only ${\sim}5.5$\,m of relative motion, well within the 200\,m interference range. Thus, in the tested scenarios, geometry changes are modest over the learning window, although the interference graph is recomputed every CPI to reflect the current topology.

\subsection{Regret Analysis and Convergence Guarantee}

\subsubsection{External/internal-regret bound}
We establish the sublinear external-regret bound in the number of CPI rounds $\mathcal{T}$ for 
Subroutine~\ref{algo:sub2_ext}, instantiated for any radar player $i\in\mathcal{I}$. 
In addition to the classical $\mathcal{O}(|\Sigma^i|^{1/2}\mathcal{T}^{1/2})$ bound that is widely seen in OMD style methods, 
our result demonstrates that casting an explicit constant exploration parameter will result in a larger dependence on time ($\mathcal{O}(\mathcal{T}^{2/3})$) but a milder dependence on the size of the strategy space ($\mathcal{O}(|\Sigma^i|^{1/3})$).
This tradeoff is particularly well-suited to our setting where the strategy space is large but the CPI horizon $\mathcal{T}$ is relatively limited.
The bound immediately yields vanishing
average regret $\mathcal{O} \big( \max_{i\in \Ical}\{ (|\Sigma^i| \log |\Sigma^i|)^{1/3} \mathcal{T}^{2/3}  \}\big)$ and the convergence of the empirical play to the CCE set (by Corollary~\ref{cor:cce}).

\begin{theorem}\label{thm:bandit_ext_regret}
Fix a radar $i \in \Ical$ running Algorithm~\ref{algo:noregrettransmit} with time-invariant parameters $\eta_\tau \equiv \eta$ and $\gamma_\tau \equiv \min \{ \gamma_{\max} , \gamma\}$, the expected cumulative external regret for Subroutine~\ref{algo:sub2_ext} against the best fixed strategy $\sigma^\star\in\Sigma^i$ obeys
\begin{equation}\label{eq:regret_bound_main}
\mathbb{E} \left[ \mathcal{R}^i_{\mathrm{ext}} (\mathcal{T})\right]
\le  \frac{\log |\Sigma^i|}{\eta} + \frac{\eta \mathcal{T}|\Sigma^i|}{ \min \{ \gamma_{\max}, \gamma \} }  + \min\{ \gamma_{\max}, \gamma\}\mathcal{T},
\end{equation}
where the constant $\gamma_{\max} \in (0,1)$.

\noindent\textbf{Case $\gamma = 0$ (no exploration):} The bound simplifies to
\begin{equation}\label{eq:regret_bound_main2}
\mathbb{E} \left[ \mathcal{R}^i_{\mathrm{ext}} (\mathcal{T})\right]
\le  \frac{\log |\Sigma^i|}{\eta} + \eta \mathcal{T}|\Sigma^i|  .
\end{equation}
In particular,
  $  \gamma\  :=\ \frac{ (|\Sigma^i|\log |\Sigma^i|)^{1/3}}{\mathcal{T}^{1/3}}, 
\eta\  :=\ \frac{(\log |\Sigma^i|)^{2/3}}{|\Sigma^i|^{1/3} \mathcal{T}^{2/3}}$ 
yields
\begin{equation*}
\begin{aligned}
    \mathbb{E}[\mathcal{R}^i_\mathcal{\mathrm{ext}} (\mathcal{T})]
  =
\mathcal{O} \Big(|\Sigma^i|^{1/3} \mathcal{T}^{2/3} (\log |\Sigma^i|)^{1/3}\Big); 
\end{aligned}
\end{equation*}
choosing $  
  \gamma : = 0 , \eta := \sqrt{\frac{\log |\Sigma^i|}{|\Sigma^i|\mathcal{T}}}$ yields
\begin{equation*}
\begin{aligned}
\mathbb{E}[\mathcal{R}^i_\mathcal{\mathrm{ext}} (\mathcal{T})]
  =
\mathcal{O} \Big(|\Sigma^i|^{1/2} \mathcal{T}^{1/2} (\log |\Sigma^i|)^{1/2}\Big).
\end{aligned}
\end{equation*}
\end{theorem}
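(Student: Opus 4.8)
The plan is to treat Subroutine~\ref{algo:sub2_ext} as an EXP3-style bandit instance of entropic online mirror descent and to split the expected regret into an \emph{exploration-bias} term, a \emph{zero-mean realization} term, and the \emph{regret of the exploited (softmax) iterate measured on estimated utilities}. Write $U_\tau(\sigma):=U_i(\sigma,\sigma^{-i}_\tau)\in[0,1]$ for the unknown true utility vector, let $\tilde p^i_\tau$ be the softmax iterate and $p^i_\tau=(1-\gamma)\tilde p^i_\tau+\gamma/|\Sigma^i|$ the played mixture, abbreviate $N:=|\Sigma^i|$, and fix the comparator $\sigma^\star$ to be the in-hindsight maximizer. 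First I would condition on the history $\mathcal{F}_{\tau-1}$ and use $\mathbb{E}[U_i(\sigma^i_\tau,\sigma^{-i}_\tau)\mid\mathcal{F}_{\tau-1}]=\langle p^i_\tau,U_\tau\rangle$ to replace realized play by its conditional mean at zero expected cost. Then, since $\langle p^i_\tau,U_\tau\rangle\ge(1-\gamma)\langle\tilde p^i_\tau,U_\tau\rangle$ and $\langle\tilde p^i_\tau,U_\tau\rangle\le 1$, I would peel off the uniform-exploration contribution, producing the additive $\min\{\gamma_{\max},\gamma\}\,\mathcal{T}$ bias and reducing the problem to bounding $\sum_\tau\big(U_\tau(\sigma^\star)-\langle\tilde p^i_\tau,U_\tau\rangle\big)$.

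The second step passes from true to estimated utilities. I would verify conditional unbiasedness, $\mathbb{E}[\widehat U^i_\tau(\sigma)\mid\mathcal{F}_{\tau-1}]=U_\tau(\sigma)$ for every $\sigma$: for $\gamma>0$ directly from $\sigma^i_\tau\sim p^i_\tau$, and for $\gamma=0$ by recognizing the reflected rule $\widehat U^i_\tau=1-\widehat\ell^i_\tau$ as one minus an unbiased importance-weighted loss estimator $\widehat\ell^i_\tau$ of $L_\tau:=1-U_\tau$. With unbiasedness in hand, applying the standard entropic-OMD (multiplicative-weights) inequality to the softmax iterates yields, in expectation,
\begin{equation*}
\mathbb{E}\Big[\sum_{\tau}\big(U_\tau(\sigma^\star)-\langle\tilde p^i_\tau,U_\tau\rangle\big)\Big]
\le
\frac{\log N}{\eta}
+\eta\sum_{\tau}\mathbb{E}\big[\langle\tilde p^i_\tau,(\widehat U^i_\tau)^2\rangle\big],
\end{equation*}
where the first term is the relative-entropy initialization penalty and the second is the usual stability term.

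The final step bounds the stability term, and this is where the two regimes separate. For $\gamma>0$ I would use the exploration floor $p^i_\tau(\sigma)\ge\gamma/N$ to obtain the deterministic envelope $\widehat U^i_\tau(\sigma)\le N/\gamma$, whence $\langle\tilde p^i_\tau,(\widehat U^i_\tau)^2\rangle\le\|\widehat U^i_\tau\|_\infty\,\langle\tilde p^i_\tau,\widehat U^i_\tau\rangle$, which is $\mathcal{O}(N/\gamma)$ after absorbing the $(1-\gamma)^{-1}$ factor; this produces the $\eta\mathcal{T} N/\min\{\gamma_{\max},\gamma\}$ term and gives \eqref{eq:regret_bound_main}. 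For $\gamma=0$ no mixing occurs ($p^i_\tau=\tilde p^i_\tau$), so the importance weights cancel and $\mathbb{E}[\langle\tilde p^i_\tau,(\widehat\ell^i_\tau)^2\rangle\mid\mathcal{F}_{\tau-1}]=\sum_\sigma L_\tau(\sigma)^2\le N$; this self-bounding removes the $1/\gamma$ blow-up and yields \eqref{eq:regret_bound_main2}. Substituting the stated $(\eta,\gamma)$ choices and collecting powers of $N$, $\log N$, and $\mathcal{T}$ then reproduces the $\mathcal{O}(|\Sigma^i|^{1/3}\mathcal{T}^{2/3}(\log|\Sigma^i|)^{1/3})$ and $\mathcal{O}(|\Sigma^i|^{1/2}\mathcal{T}^{1/2}(\log|\Sigma^i|)^{1/2})$ rates.

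I expect the main obstacle to be the validity of the quadratic approximation underlying the stability term: the step $e^{x}\le 1+x+x^2$ (equivalently second-order control of the log-partition function) requires $\eta\,\widehat U^i_\tau(\sigma)\le 1$, i.e.\ $\eta\le\gamma/N$, so I must confirm the chosen parameters respect this constraint, and for $\gamma=0$ switch to the loss-based inequality $e^{-x}\le 1-x+\tfrac12 x^2$, valid for all $x\ge 0$, to avoid any upper-range restriction. The remaining work is bookkeeping: keeping the realization terms genuinely zero-mean under $\mathcal{F}_{\tau-1}$, handling the max/expectation exchange for $\sigma^\star$ via the fixed-comparator (oblivious per-CPI opponent) reduction, and absorbing the $(1-\gamma)^{-1}$ factors into constants consistent with the $\gamma_{\max}$ cap.
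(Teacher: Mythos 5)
Your proposal is correct and follows essentially the same route as the paper's proof: an EXP3-style potential (multiplicative-weights/entropic-OMD) argument with the second-order expansion of the log-partition, conditional unbiasedness of the importance-weighted estimator, the exploration floor $p^i_\tau(\sigma)\ge\gamma/|\Sigma^i|$ giving the $\eta\mathcal{T}|\Sigma^i|/\gamma$ stability term, and the self-bounding variance of the reflected (loss-based) estimator removing the $1/\gamma$ factor when $\gamma=0$. Your explicit flagging of the constraint $\eta\le\gamma/|\Sigma^i|$ matches the paper's own assumption, and your loss-based reformulation of the $\gamma=0$ case via $\widehat U^i_\tau=1-\widehat\ell^i_\tau$ is just a cleaner bookkeeping of the same estimator the paper analyzes directly.
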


We also prove that running Algorithm~\ref{algo:noregrettransmit} with Subroutine~\ref{algo:sub2_int} achieves sublinear internal regret, and thus yield convergence of empirical play to the CE set. The argument follows the standard reduction of swap-regret to a collection of external-regret problems on \emph{rows} of strategy candidates \cite{blum07externalinternal}, combined with a potential-function analysis for entropic OMD under importance-weighted bandit estimates.

\begin{theorem}\label{thm:swap_regret}
Fix a radar $i \in \Ical$ running Algorithm~\ref{algo:noregrettransmit} with time-invariant parameters $\eta_\tau \equiv \eta$ and $ \gamma_\tau \equiv \min \{ \gamma_{\max} , \gamma\}$, the expected cumulative internal regret for Subroutine~\ref{algo:sub2_int} satisfies
\begin{equation*}
\mathbb{E}\left[ \mathcal{R}^i_\mathcal{\mathrm{int}} (\mathcal{T})\right]
\ \le\
|\Sigma^i| \left(
\frac{\log |\Sigma^i|}{\eta}
+
\frac{\eta  \mathcal{T}|\Sigma^i|}{\gamma}
+
\gamma \mathcal{T}\right),
\end{equation*}
where the constant $\gamma_{\max} \in (0,1)$.

\noindent\textbf{Case $\gamma = 0$ (no exploration):} The bound simplifies to
\begin{equation*}
\mathbb{E} \left[ \mathcal{R}^i_{\mathrm{int}} (\mathcal{T})\right]
\le  |\Sigma^i| \left( \frac{\log |\Sigma^i|}{\eta} + \eta \mathcal{T}|\Sigma^i| \right)  .
\end{equation*}

The two choices of $\eta$ and $\gamma$ in Theorem~\ref{thm:bandit_ext_regret} yields
\begin{equation*}
\begin{aligned}
    \mathbb{E}[\mathcal{R}^i_\mathcal{\mathrm{int}} (\mathcal{T})]
&  =
\mathcal{O} \Big(|\Sigma^i|^{4/3} \mathcal{T}^{2/3} (\log |\Sigma^i|)^{1/3}\Big), \text{ and } \\
 \mathbb{E}[\mathcal{R}^i_\mathcal{\mathrm{int}} (\mathcal{T})]
 &  =
\mathcal{O} \Big(|\Sigma^i|^{3/2} \mathcal{T}^{1/2} (\log |\Sigma^i|)^{1/2}\Big), \text { respectively. }
\end{aligned}
\end{equation*}
\end{theorem}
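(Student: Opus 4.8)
The plan is to invoke the Blum--Mansour swap-to-external reduction \cite{blum07externalinternal} that is already wired into Subroutine~\ref{algo:sub2_int}, and then discharge the resulting $|\Sigma^i|$ external-regret problems by replaying the potential-function analysis of Theorem~\ref{thm:bandit_ext_regret} on each one. First I would exploit the fact that the swap benchmark maximizes over maps $\phi^i:\Sigma^i\to\Sigma^i$ whose values $\phi^i(\sigma)$ may be chosen independently, so the maximization decouples across source strategies $\sigma$. The structural device is the left-stationarity $p^i_{\tau}=p^i_{\tau}Q^i_{\tau}$ enforced in the final step of the subroutine, with $Q^i_\tau[\sigma,\cdot]=q^i_{\tau,\sigma}(\cdot)$; because the exploration mass $\gamma_\tau/|\Sigma^i|$ renders each $Q^i_\tau$ an ergodic stochastic matrix, this stationary distribution exists, is unique, and is predictable given the history. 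Taking expectations over the master sampling $\sigma^i_\tau\sim p^i_\tau$ and substituting the stationarity identity $\sum_{\sigma'}p^i_\tau(\sigma')U_i(\sigma',\sigma^{-i}_\tau)=\sum_{\sigma}p^i_\tau(\sigma)\langle q^i_{\tau,\sigma},U_i(\cdot,\sigma^{-i}_\tau)\rangle$ rewrites the realized-play term as an aggregate of per-row played utilities (the gap between realized and expected play being a martingale that vanishes in expectation). This yields the reduction
\begin{equation*}
\mathbb{E}[\mathcal{R}^i_{\mathrm{int}}(\mathcal{T})]
\ \le\ \sum_{\sigma\in\Sigma^i}
\max_{\sigma'\in\Sigma^i}\sum_{\tau=1}^{\mathcal{T}}
p^i_\tau(\sigma)\big[U_i(\sigma',\sigma^{-i}_\tau)-\langle q^i_{\tau,\sigma},U_i(\cdot,\sigma^{-i}_\tau)\rangle\big],
\end{equation*}
so the total swap regret is controlled by the sum of the external regrets of the $|\Sigma^i|$ row learners, each facing the $p^i_\tau(\sigma)$-weighted reward stream.

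Next I would bound each row's external regret. Row $\sigma$ executes precisely the entropic OMD update of Subroutine~\ref{algo:sub2_ext}, but on the rescaled reward $p^i_\tau(\sigma)\,U_i(\cdot,\sigma^{-i}_\tau)$ via the unbiased importance-weighted estimate $p^i_\tau(\sigma)\,\widehat U^i_\tau(\cdot)$ under the same stepsize $\eta$. Since $p^i_\tau(\sigma)\in[0,1]$, the rescaling can only shrink both the comparator gap and the stability (second-order) term relative to the unweighted instance analyzed in Theorem~\ref{thm:bandit_ext_regret}, while the exploration floor $p^i_\tau(\sigma')\ge\gamma/|\Sigma^i|$ continues to bound the second moment of the importance-weighted estimate, recovering the same per-round stability contribution $|\Sigma^i|/\gamma$. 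Replaying the log-partition potential argument verbatim therefore gives, for every row, the bound $\tfrac{\log|\Sigma^i|}{\eta}+\tfrac{\eta\mathcal{T}|\Sigma^i|}{\gamma}+\gamma\mathcal{T}$ (and $\tfrac{\log|\Sigma^i|}{\eta}+\eta\mathcal{T}|\Sigma^i|$ when $\gamma=0$).

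Summing this common per-row bound over the $|\Sigma^i|$ rows produces the claimed prefactor $|\Sigma^i|$ and both displayed inequalities; the explicit rates then follow by inserting the two parameter schedules of Theorem~\ref{thm:bandit_ext_regret}, namely $\gamma=(|\Sigma^i|\log|\Sigma^i|)^{1/3}/\mathcal{T}^{1/3}$ with $\eta=(\log|\Sigma^i|)^{2/3}/(|\Sigma^i|^{1/3}\mathcal{T}^{2/3})$ to obtain $\mathcal{O}(|\Sigma^i|^{4/3}\mathcal{T}^{2/3}(\log|\Sigma^i|)^{1/3})$, and $\gamma=0$ with $\eta=\sqrt{\log|\Sigma^i|/(|\Sigma^i|\mathcal{T})}$ to obtain $\mathcal{O}(|\Sigma^i|^{3/2}\mathcal{T}^{1/2}(\log|\Sigma^i|)^{1/2})$. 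I expect the main obstacle to lie in the reduction bookkeeping rather than the per-row analysis: one must justify that $p^i_\tau$, defined implicitly as the stationary distribution of $Q^i_\tau$, is a legitimate predictable quantity, that the stationarity identity exactly transmutes the master's realized utility into aggregated per-row utilities modulo a vanishing martingale, and that the $\phi^i$-maximization splits cleanly across rows without double counting. Once that accounting is secured, the bandit-OMD guarantee of Theorem~\ref{thm:bandit_ext_regret} transfers to each row with the factor $p^i_\tau(\sigma)\le 1$ working only in our favor.
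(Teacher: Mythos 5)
Your proposal follows essentially the same route as the paper's proof: the Blum--Mansour swap-to-external reduction using the stationarity identity $p^i_\tau = p^i_\tau Q^i_\tau$ to decompose the swap regret into $|\Sigma^i|$ row-wise external-regret problems with $p^i_\tau(\sigma)$-weighted rewards, each discharged by replaying the potential-function bound of Theorem~\ref{thm:bandit_ext_regret} (with $p^i_\tau(\sigma)\le 1$ only helping), then summing rows and inserting the two parameter schedules. Your added care about predictability of the stationary distribution and the vanishing martingale gap is sound and in fact slightly more explicit than the paper's own argument (note only that strict positivity of the softmax already guarantees ergodicity of $Q^i_\tau$ even when $\gamma=0$, which matters since the second displayed bound takes exactly that regime).
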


We defer the proofs for Theorem~\ref{thm:bandit_ext_regret} and \ref{thm:swap_regret} to the Appendix.

\subsubsection{Implication for Convergence}

The regret bounds hence can be translated into finite–time equilibrium guarantees for the repeated game $\Gamma$ restricted to the finite strategy sets $\{\Sigma^i\}_{i\in\Ical}$ (i.e., the CPI–long schedules implementable by the stochastic round–robin sampler). 

\begin{corollary}\label{cor:CCE_CE_rate}
Let $\bar\pi_{\mathcal T}$ be the empirical joint-play distribution up to CPI $\mathcal T$ over the restricted finite game with strategy sets $\{\Sigma^i\}_{i\in\Ical}$, and let
\[
\varepsilon^{(i)}_{\mathrm{ext}}(\mathcal T)=\frac{1}{\mathcal T}  \mathbb{E}\big[\mathcal{R}^{i}_{\mathrm{ext}}(\mathcal T)\big],
\qquad
\varepsilon^{(i)}_{\mathrm{int}}(\mathcal T)=\frac{1}{\mathcal T}  \mathbb{E}\big[\mathcal{R}^{i}_{\mathrm{int}}(\mathcal T)\big],
\]
with $\varepsilon_{\mathrm{ext}}(\mathcal T):=\max_{i}\varepsilon^{(i)}_{\mathrm{ext}}(\mathcal T)$ and $\varepsilon_{\mathrm{int}}(\mathcal T):=\max_{i}\varepsilon^{(i)}_{\mathrm{int}}(\mathcal T)$. Then, 
 if each radar runs Subroutine~\ref{algo:sub2_ext} and satisfies Theorem~\ref{thm:bandit_ext_regret}  with  proper choices of  $\gamma \in (0, \gamma_{\max})$ and $\eta$, then for every player $i$ and every deviation $\hat\sigma^i\in\Sigma^i$, $\bar\pi_{\mathcal T}$ is an $\varepsilon_{\mathrm{ext}}(\mathcal T)$–CCE, where
$ \varepsilon_{\mathrm{ext}}(\mathcal T)=
\mathcal{O}\Bigg(\max_{i\in\Ical}|\Sigma^i|^{1/3} (\log |\Sigma^i|)^{1/3} \mathcal{T}^{-1/3}\Bigg)$.
(ii) If each radar runs Subroutine~\ref{algo:sub2_int} and satisfies Theorem~\ref{thm:swap_regret}  with proper choices of $\gamma \in (0, \gamma_{\max})$ and $\eta$, then for every player $i$ and every pair $\sigma^i,\sigma^{i\prime}\in\Sigma^i$, $\bar\pi_{\mathcal T}$ is an $\varepsilon_{\mathrm{int}}(\mathcal T)$–CE.
$\varepsilon_{\mathrm{int}}(\mathcal T)=
\mathcal{O}\Bigg(\max_{i\in\Ical}|\Sigma^i|^{4/3} (\log |\Sigma^i|)^{1/3} \mathcal{T}^{-1/3} \Bigg)$.
\end{corollary}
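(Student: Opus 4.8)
The plan is to treat Corollary~\ref{cor:CCE_CE_rate} as a bookkeeping consequence of three already-established facts: the pathwise regret-to-equilibrium bridge of Lemma~\ref{lemma:cenoregret} and Corollary~\ref{cor:cce}, the finite-time regret bounds of Theorems~\ref{thm:bandit_ext_regret} and~\ref{thm:swap_regret}, and the linearity of the equilibrium inequalities in the empirical occupancy measure. The only genuine work left is to divide the cumulative bounds by $\mathcal{T}$, maximize over players, and justify that the equilibrium guarantee survives the passage to expectation.

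For part (i), I would first record the exact identity that the coarse-correlated-equilibrium violation of $\bar\pi_{\mathcal T}$ is, pathwise, the per-round external regret. By the definition of $\bar\pi_{\mathcal T}$ in \eqref{eq:emp}, a draw $\sigma\sim\bar\pi_{\mathcal T}$ places mass $1/\mathcal T$ on each realized profile $\sigma_\tau=(\sigma^i_\tau,\sigma^{-i}_\tau)$, so for any deviation $\hat\sigma^i$ one has $\mathbb{E}_{\sigma\sim\bar\pi_{\mathcal T}}[U_i(\hat\sigma^i,\sigma^{-i})-U_i(\sigma^i,\sigma^{-i})]=\tfrac{1}{\mathcal T}\sum_{\tau}[U_i(\hat\sigma^i,\sigma^{-i}_\tau)-U_i(\sigma^i_\tau,\sigma^{-i}_\tau)]$, and maximizing over $\hat\sigma^i$ returns $\tfrac{1}{\mathcal T}\mathcal{R}^i_{\mathrm{ext}}(\mathcal T)$. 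This is precisely the content of Corollary~\ref{cor:cce}. Taking expectations over the algorithm's internal randomization and the induced joint play, the expected CCE violation of $\bar\pi_{\mathcal T}$ equals $\varepsilon^{(i)}_{\mathrm{ext}}(\mathcal T)=\tfrac{1}{\mathcal T}\mathbb{E}[\mathcal{R}^i_{\mathrm{ext}}(\mathcal T)]$.

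It then remains to substitute the bound of Theorem~\ref{thm:bandit_ext_regret} under the stated choice $\gamma=(|\Sigma^i|\log|\Sigma^i|)^{1/3}\mathcal T^{-1/3}$, $\eta=(\log|\Sigma^i|)^{2/3}|\Sigma^i|^{-1/3}\mathcal T^{-2/3}$, which gives $\mathbb{E}[\mathcal{R}^i_{\mathrm{ext}}(\mathcal T)]=\mathcal{O}(|\Sigma^i|^{1/3}\mathcal T^{2/3}(\log|\Sigma^i|)^{1/3})$; dividing by $\mathcal T$ and taking the maximum $\varepsilon_{\mathrm{ext}}(\mathcal T)=\max_i\varepsilon^{(i)}_{\mathrm{ext}}(\mathcal T)$ produces the claimed rate $\mathcal{O}(\max_i|\Sigma^i|^{1/3}(\log|\Sigma^i|)^{1/3}\mathcal T^{-1/3})$. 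Part (ii) follows the same template with Lemma~\ref{lemma:cenoregret} in place of Corollary~\ref{cor:cce}: the correlated-equilibrium violation of $\bar\pi_{\mathcal T}$ is pathwise dominated by the per-round swap regret, so the expected violation equals $\tfrac{1}{\mathcal T}\mathbb{E}[\mathcal{R}^i_{\mathrm{int}}(\mathcal T)]$, and substituting Theorem~\ref{thm:swap_regret} (whose extra $|\Sigma^i|$ factor comes from the row-wise reduction of swap regret to external regret) yields $\varepsilon_{\mathrm{int}}(\mathcal T)=\mathcal{O}(\max_i|\Sigma^i|^{4/3}(\log|\Sigma^i|)^{1/3}\mathcal T^{-1/3})$.

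The main obstacle is the treatment of expectation: Lemma~\ref{lemma:cenoregret} and Corollary~\ref{cor:cce} are pathwise statements, whereas Theorems~\ref{thm:bandit_ext_regret}–\ref{thm:swap_regret} control regret only in expectation over the importance-weighted bandit estimates and the sampled plays. The clean resolution is that every equilibrium inequality is linear in the empirical occupancy measure, so the expected violation coincides with the expected per-round regret and no concentration is required for an in-expectation $\varepsilon$-equilibrium (equivalently, $\mathbb{E}[\bar\pi_{\mathcal T}]$ is an $\varepsilon$-CCE/CE). I would also remark that applying the single-player bounds to the realized opponent trajectory is legitimate because those guarantees are adversarial in the reward sequence, hence valid even when the other radars are themselves learning. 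Upgrading the conclusion to a high-probability pathwise statement would instead require an Azuma–Hoeffding argument on the martingale differences of the importance-weighted estimator, contributing an additional $\tilde{\mathcal{O}}(\mathcal T^{-1/2})$ term that is dominated by the leading $\mathcal T^{-1/3}$ rate and therefore does not alter the stated orders.
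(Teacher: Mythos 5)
Your proposal is correct and takes essentially the same route as the paper's own proof sketch: average the regret inequalities over $\tau=1{:}\mathcal T$, rewrite the averages as expectations under $\bar\pi_{\mathcal T}$, invoke Corollary~\ref{cor:cce} for the external-regret $\Rightarrow$ CCE direction and Lemma~\ref{lemma:cenoregret} (via pairwise swap maps) for the swap-regret $\Rightarrow$ CE direction, and then substitute the rates from Theorems~\ref{thm:bandit_ext_regret} and~\ref{thm:swap_regret} before maximizing over players. Your explicit resolution of the pathwise-versus-expectation mismatch---observing that the equilibrium inequalities are linear in the empirical occupancy measure, so the expected violation equals the expected per-round regret and no concentration argument is needed for the in-expectation guarantee---is a point the paper's sketch leaves implicit, and your remark that the adversarial single-player bounds remain valid against simultaneously learning opponents correctly fills in the same tacit step.
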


\begin{proof}[Proof sketch] Average the respective regret inequalities over $\tau=1{:}\mathcal T$ for each player, divide by $\mathcal T$, and rewrite the averages as expectations under $\bar\pi_{\mathcal T}$. Part (a) is the standard external–regret $\Rightarrow$ CCE implication (cf.\ Corollary~\ref{cor:cce}); part (b) follows from the swap–regret $\Rightarrow$ CE implication (cf.\ Lemma~\ref{lemma:cenoregret}), using pairwise swap maps and conditioning on the recommended action.
\end{proof}


\section{Numerical Experiments}
\label{sec:experiment}

In this section, we validate the proposed No-Regret Hopping framework through comprehensive numerical simulations. We consider a challenging interference scenario and compare our approach against several baseline methods.

\subsection{Example Scenario: Single-Target}

\subsubsection{Scenario Configuration}

We consider a scenario where four FMCW radars operate in close proximity. The radars are positioned randomly within a circular region of radius 25 meters around a common target located at the origin, as illustrated in Fig.~\ref{fig:radar_config}.
\begin{figure}[t]
    \centering
    \includegraphics[width=0.9\linewidth]{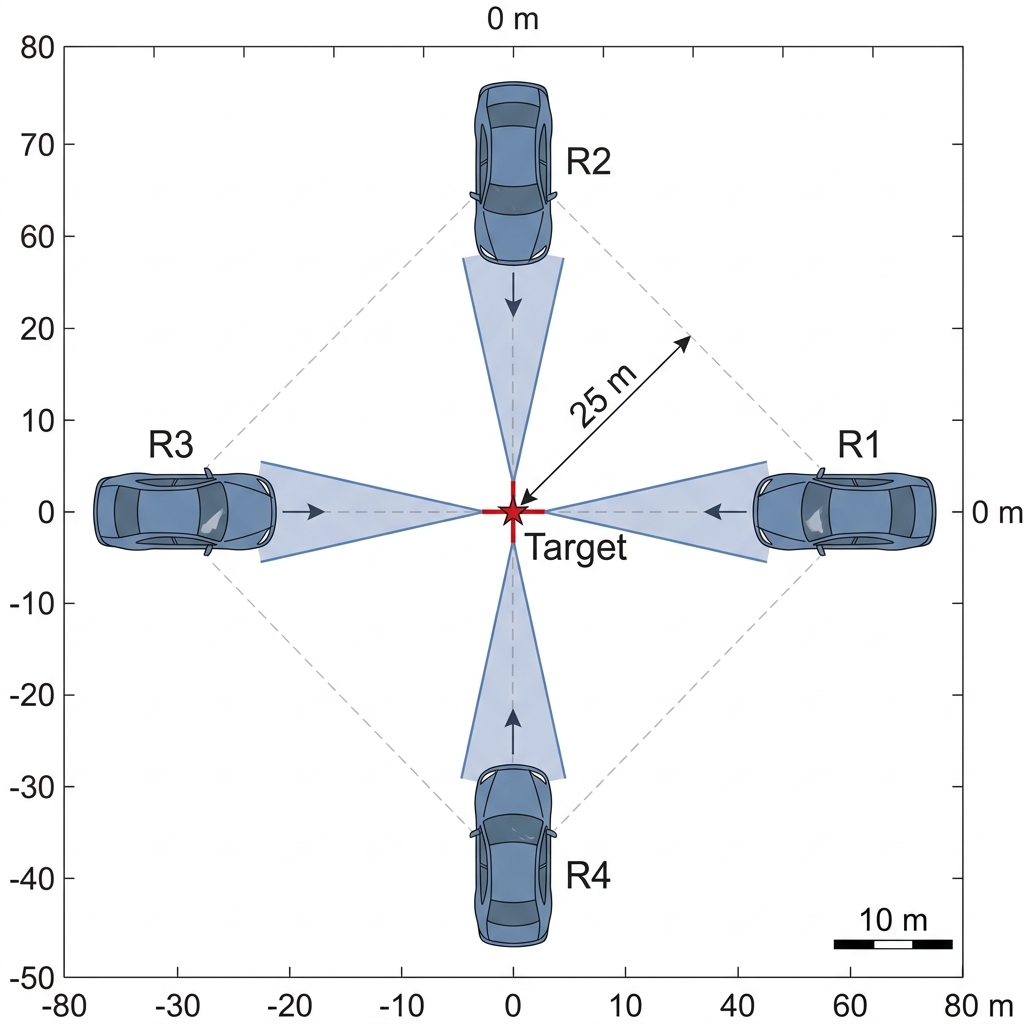}
    \caption{Example scenario: $I=4$ radars with  fully connected interference graph and synchronized operations.}
    \label{fig:radar_config}
\end{figure}
We assume the interference graph $\mathcal{G}$ is fully connected, meaning each radar potentially interferes with all others. Furthermore, we assume perfect synchronization at initialization, i.e., $\Delta_{io} = 0$ for all $(i \to o) \in \mathcal{E}$, representing a worst-case scenario encountered in dense traffic conditions. This synchronized worst-case initialization tests the algorithms' ability to learn effective interference avoidance strategies from scratch.

The radar parameters are configured as follows:
Each radar operates at a 77 GHz starting carrier frequency $f_c$. The available spectrum is divided into $A_f = 3$ frequency subbands, each with nominal bandwidth $B = 150$ MHz. 
The idle interval $[T^i_{\text{PRI}} - T^i_a - T_{\text{reset}}]$ for each radar is discretized into $A_t$ time slots, where $T_{\text{reset}}$ denotes the PLL settling time required after a frequency hop. The time-shifting action set is defined as $\mathcal{T}_h = \{0, 2T_{\text{sett}}, 4T_{\text{sett}}, \ldots, T_{\text{reset}}\}$ with $T_{\text{sett}} = 1.5$ $\mu$s serving as a guard interval. This yields approximately $A_t = 7$ discrete time-shifting slots per chirp. Each radar transmits $K = 256$ chirps per CPI with pulse repetition interval $T_{\text{PRI}} = 29.99$ $\mu$s and active chirp duration $T_a = 8.89$ $\mu$s.
The radars are uniformly distributed within $5$ m circles around the vertices of the polygon with radius $25$ m, detecting with relative velocity $v$ uniformly distributed within the interval of $[-25, 25]$ m/s. The target is placed uniformly randomly within the circle around the origin.  The radar cross-section of the target is $\sigma_{\text{RCS}} = 20$ dBsm. Individual radar bandwidths vary slightly within $[110, 150]$ MHz to emulate realistic heterogeneity in automotive radar systems.
Table~\ref{tab:radar_params} summarizes the key radar parameters used in the simulations. 
All simulations use a common random seed for reproducibility.

\begin{table}[t]
\centering
\caption{Radar System Parameters}
\label{tab:radar_params}
\begin{tabular}{lc}
\hline
\textbf{Parameter} & \textbf{Value} \\
\hline
Frequency subbands $A_f$ $\times$ Time slots $A_t$ & 3 $\times$ 7 \\
Carrier frequency $f_c$ (GHz) & 77 \\
Subband bandwidth $B$ (MHz) & uniform[110, 150] \\
Active duration $T_a$ ($\mu$s) & 8.89 \\
Pulse repetition interval $T_{\text{PRI}}$ ($\mu$s) & 29.99 \\
Chirps per CPI $K$ & 256 \\
Target range $r$ (m) & 25 + uniform[-5,5] \\
Target velocity $v$ (m/s) &  uniform[-25, 25]  \\
TX power $P_t$ (dBm) & 13 \\
Max unambiguous range $R_u$ (m) & 200 \\
Noise power $P_N$ (dBm) & $-88$ \\
Block length $\ell$ (chirps) & $\max(A_f, A_t) = 7$ \\
\hline
\end{tabular}
\end{table}

\subsubsection{Algorithm Evaluation Setup}

We evaluate four variants of Algorithm~\ref{algo:noregrettransmit}: Random Hopping (Random), Nash Equilibrium (Nash), External Regret Minimization (External), and Internal Regret Minimization (Internal). In the Random and Nash variants, the strategy-update mechanism is disabled, whereas the External and Internal variants employ Subroutines~\ref{algo:sub2_ext} and~\ref{algo:sub2_int}, respectively, for online strategy updates. To ensure a direct correspondence between mixed strategies and joint time–frequency actions, we construct each radar’s strategy set $\Sigma^i$ to have the same cardinality as the joint action space $\mathcal{A}_f \times \mathcal{A}_t$, with each strategy uniquely associated with a specific $(f,t)$ pair. Under this design, sampling from $q^i_\tau(\cdot)$ is equivalent to sampling from $p^i_\tau(\cdot)$, enabling a unified treatment of pure strategies across all algorithmic variants.

\begin{enumerate}
    \item \textbf{Random}: Each radar adopts $p^i_\tau = \mathrm{Unif}(\Sigma^i)$, uniformly samples starting $f$-$t$ pairs for Subroutine~\ref{algo:sub1}.
    
    \item \textbf{Nash}: Each radar is assigned a unique starting frequency-time action pair deterministically throughout the blocks of Subroutine~\ref{algo:sub1}. We assign radar $i$ with joint action $\text{mod}(i-1, A_f \times A_t) + 1$. This achieves perfect coordination but relies on centralized scheduling.
    
    
    \item \textbf{External}: 
    We use a temperature-tampered constant learning rate $\eta_\tau = \kappa  \frac{ \log(21)^{2/3}}{ 21^{1/3} \times 15^{2/3}} = 0.1252$ with the scaling constant $\kappa = 36$. Exploration rate $\gamma_\tau$ is set to be linearly decayed from $ 0.1 $ to $0$.
    
    \item \textbf{Internal}: 
     We set the exploration rate $\gamma_\tau \equiv 0$, and the constant learning rate for updating regret matrices to be $\eta_\tau = 0.5 $. We apply a positive-part thresholding operator $[\cdot]_+$ to the score matrix $z^i_\tau$ prior to the softmax mapping at each CPI $\tau$, in direct analogy with the Hart-Mas-Colell regret-matching construction~\cite{hart2000simple}. 

\end{enumerate}

We run 15 epochs (CPIs) for each aforementioned method, and record the last epoch performance. For learning methods, we also record the convergence behavior. 
The following metrics are used for the evaluation:
\begin{itemize}
    \item SINR performance: Measured in dB, averaging the chirps of the last CPI, then averaging across the radars.
    \item Collision Rate: The number of frequency-time slots where two or more radars transmit simultaneously, normalized by the total number of slots.
    \item Convergence Behavior: Evolution of mixed strategies over epochs for the learning methods, visualized through strategy probability distributions. We show only the top $8$ probability masses over the $f$-$t$ pairs for clarity.
    \item {Range-Doppler (RD) Map Quality}: Visual assessment of target detection clarity in the 3D RD map output, with targets marked at their ground-truth positions.
\end{itemize}

\subsection{Example Scenario Results}

\subsubsection{Frequency-Time Hopping Illustration}
Fig.~\ref{fig:chirp_distribution} illustrates the hopping pattern for different methods through a frequency-time diagram snapshot between intermediate chirps. Over the full experiment, the random baseline exhibits uniform but highly colliding patterns, with approximately 35\% of time-frequency slots experiencing collisions. The Nash joint method achieves perfect separation with zero collisions through predetermined assignments, as each radar occupies a unique frequency-time cell. Our proposed methods demonstrate intermediate behavior. The external regret minimization method learns to avoid high-collision regions, achieving approximately 3\% collision rate after convergence. The internal regret minimization approach exhibits even better coordination, reaching 0\% collision rate, because the method is exploration-free and the computation of the stationary distribution itself provides sufficient implicit exploration.



\begin{figure}[t]
    \centering
    \includegraphics[width=\linewidth]{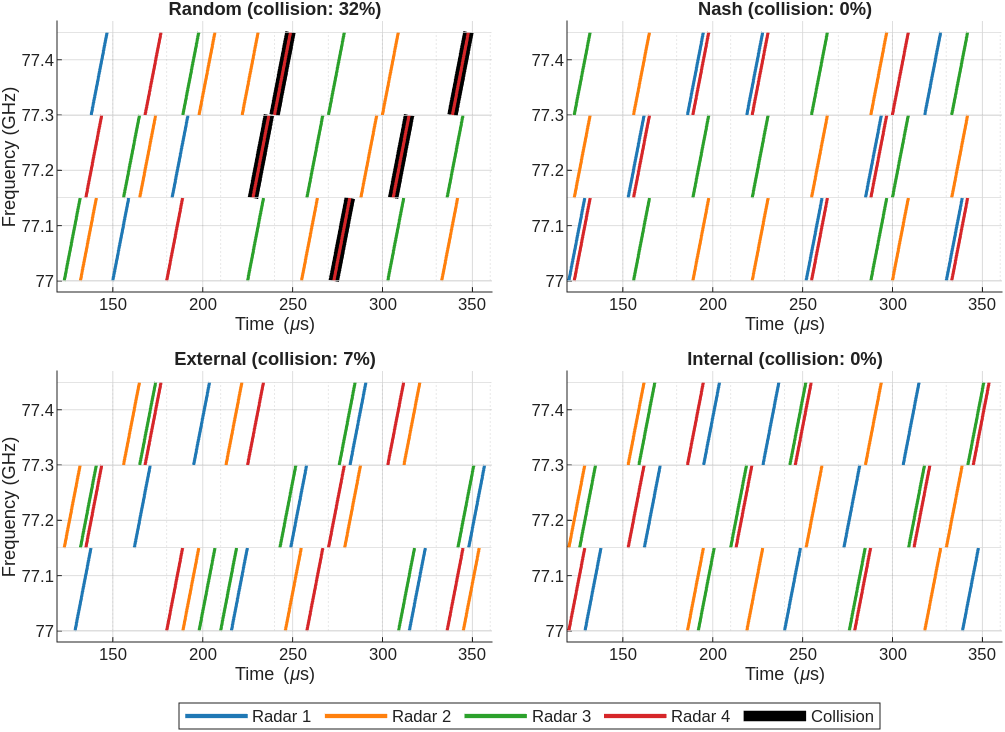}
    \caption{Frequency-time chirp illustration for different scheduling methods in the final CPI. Each panel shows LFM chirp ramps across several PRIs for four radars. Collisions (same frequency-time slot occupied by multiple interfering radars) are highlighted with thick black bands. Collision rates are computed over the full CPI ($K=256$ chirps).}
    \label{fig:chirp_distribution}
\end{figure}



\subsubsection{Strategy Convergence}

Fig.~\ref{fig:convergence} presents the evolution of mixed strategies for individual radars under the learning-based methods. Each radar's strategy is represented as a probability distribution over the $A_f \times A_t = 21$ joint actions.

\begin{figure*}[ht]
\centering
\begin{subfigure}[b]{0.485\textwidth}
    \centering
    \includegraphics[width=\linewidth]{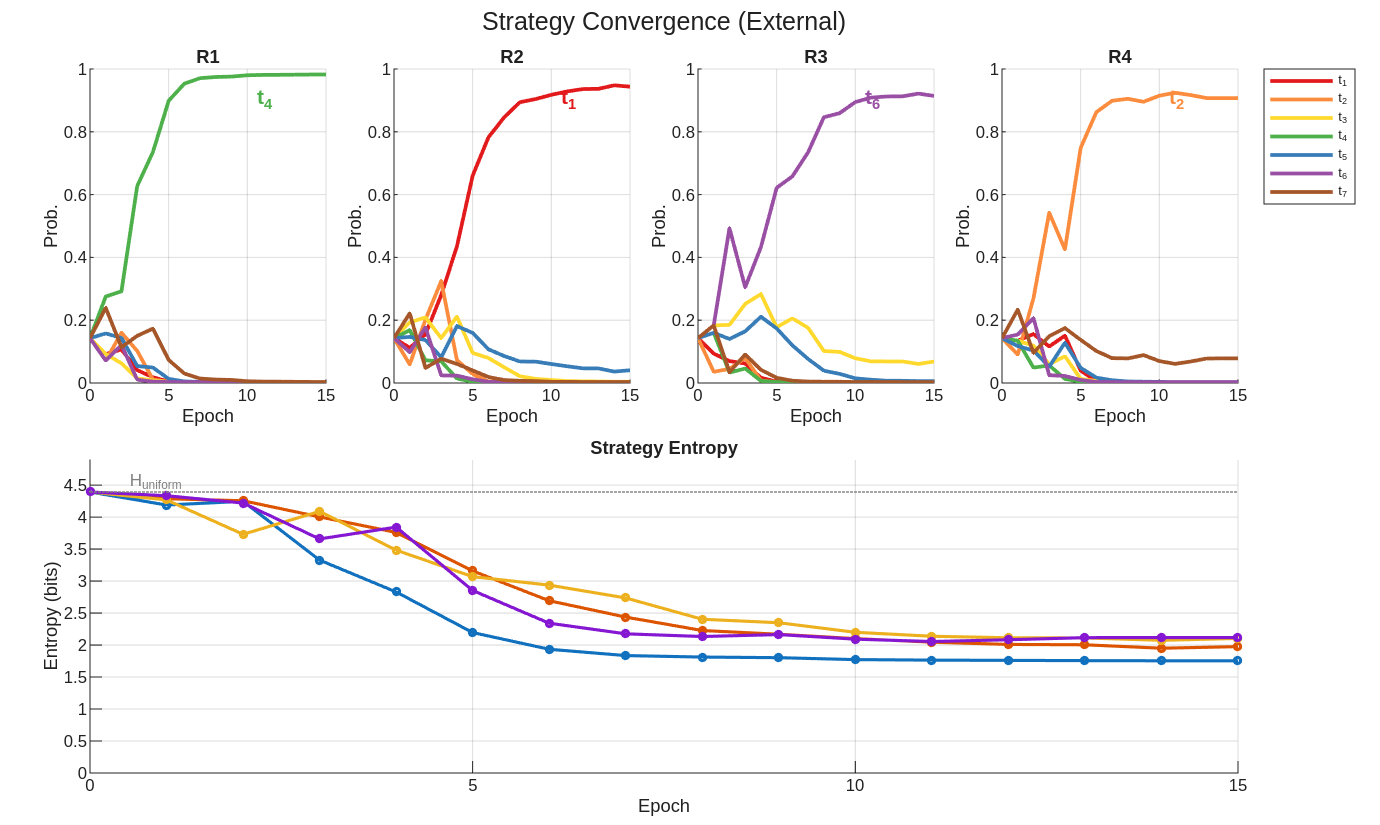}
\end{subfigure}
\hfill
\begin{subfigure}[b]{0.485\textwidth}
    \centering
    \includegraphics[width=\linewidth]{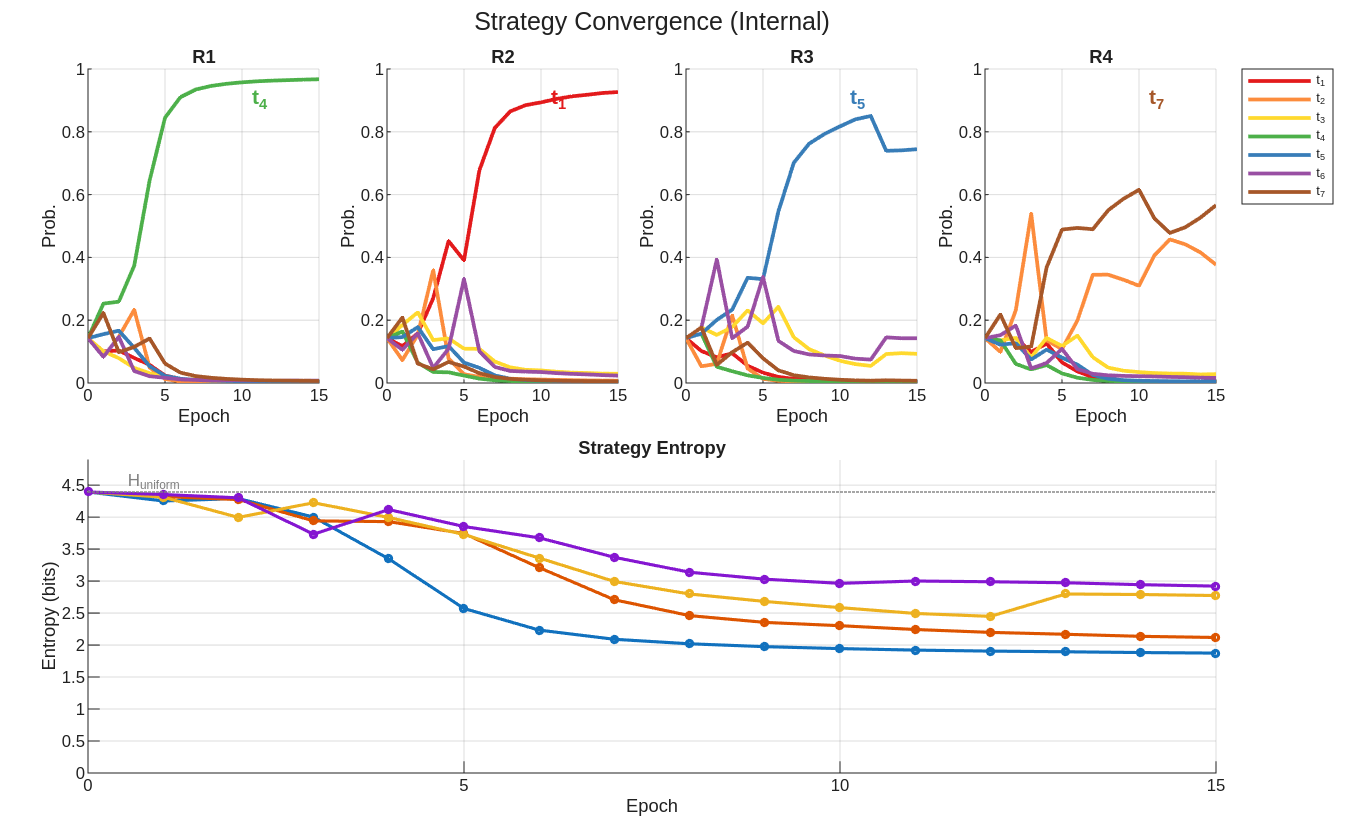}
\end{subfigure}
\caption{Strategy convergence under external (left) and internal (right) regret minimization. Top panels show per-radar group (time-slot) probability evolution; each colored curve represents one of $A_t = 7$ time-slot groups. The dominant group at the final epoch is annotated. Bottom panels show the strategy entropy decreasing from the uniform level toward concentration. Different radars converge to different dominant groups, achieving anti-coordination.}
\label{fig:convergence}
\end{figure*}

The external regret method concentrates each radar's strategy onto a distinct time-slot group within the first 5--8 epochs, as shown by the entropy decrease in Fig.~\ref{fig:convergence} (left). By epoch~10, each radar assigns the majority of its probability mass to a single group, with small residual mass on the remaining groups providing spectral diversity. The internal regret method exhibits a similar convergence pattern, driven by the cumulative swap-regret scores that penalize groups where collisions occur. Both methods achieve anti-coordination: the annotated dominant groups differ across radars, ensuring that interfering pairs occupy distinct time slots.

\subsubsection{Final CPI Results}

\begin{figure*}[ht]
\centering
\begin{subfigure}[b]{0.487\textwidth}
    \centering
    \includegraphics[width=\textwidth]{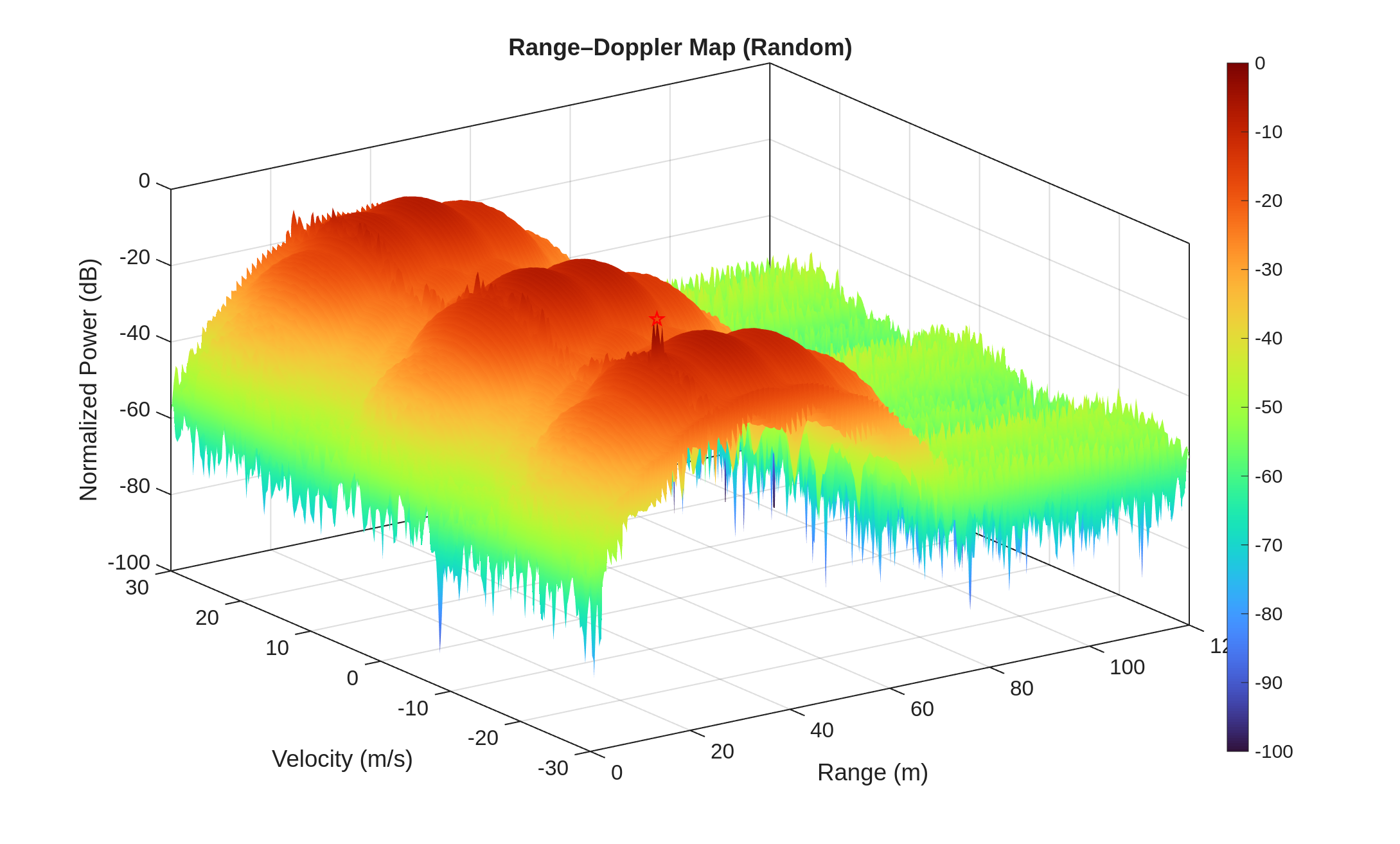}
  
\end{subfigure}
\hfill
\begin{subfigure}[b]{0.487\textwidth}
    \centering
    \includegraphics[width=\textwidth]{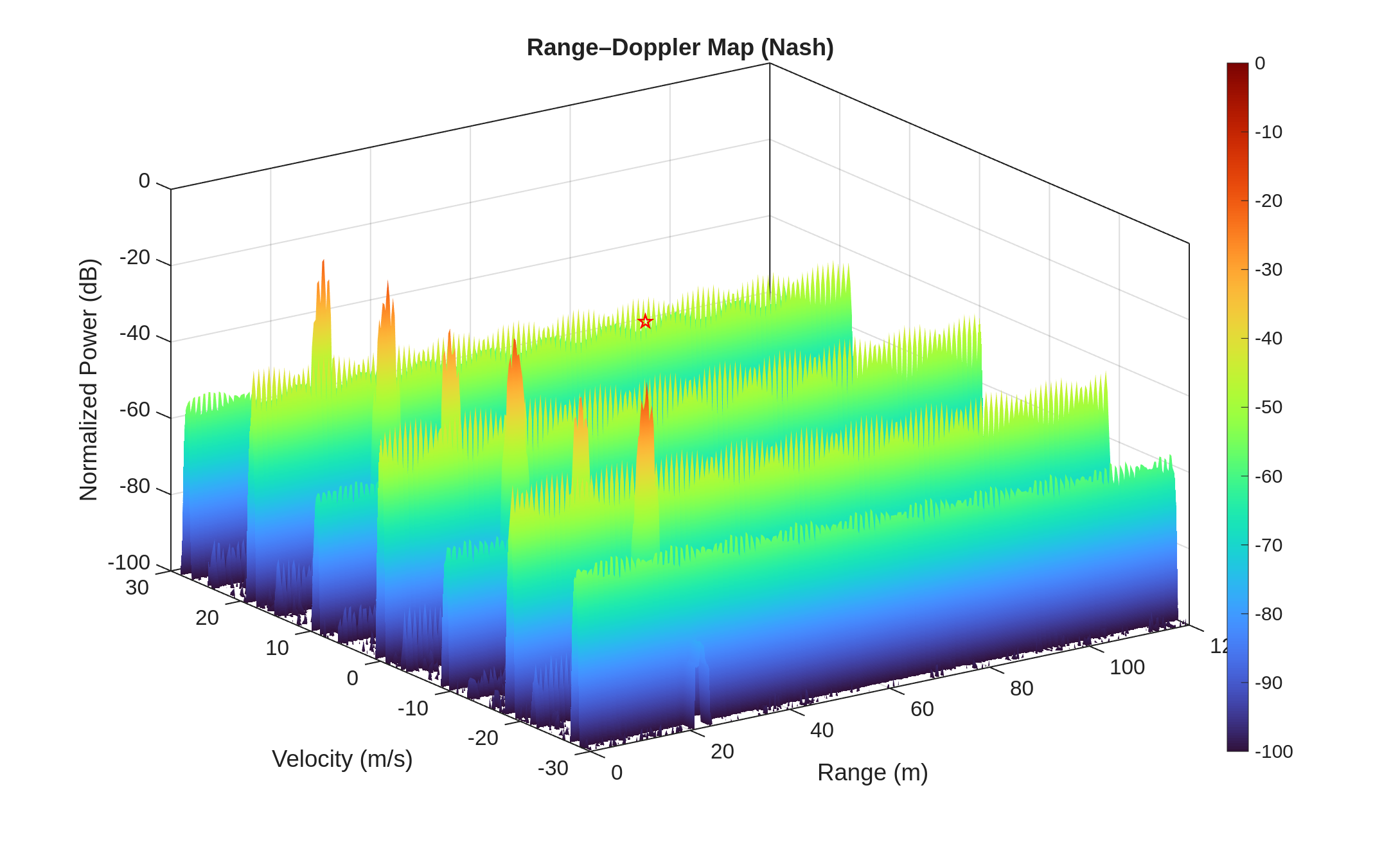}
   
\end{subfigure}

\begin{subfigure}[b]{0.487\textwidth}
    \centering
    \includegraphics[width=\textwidth]{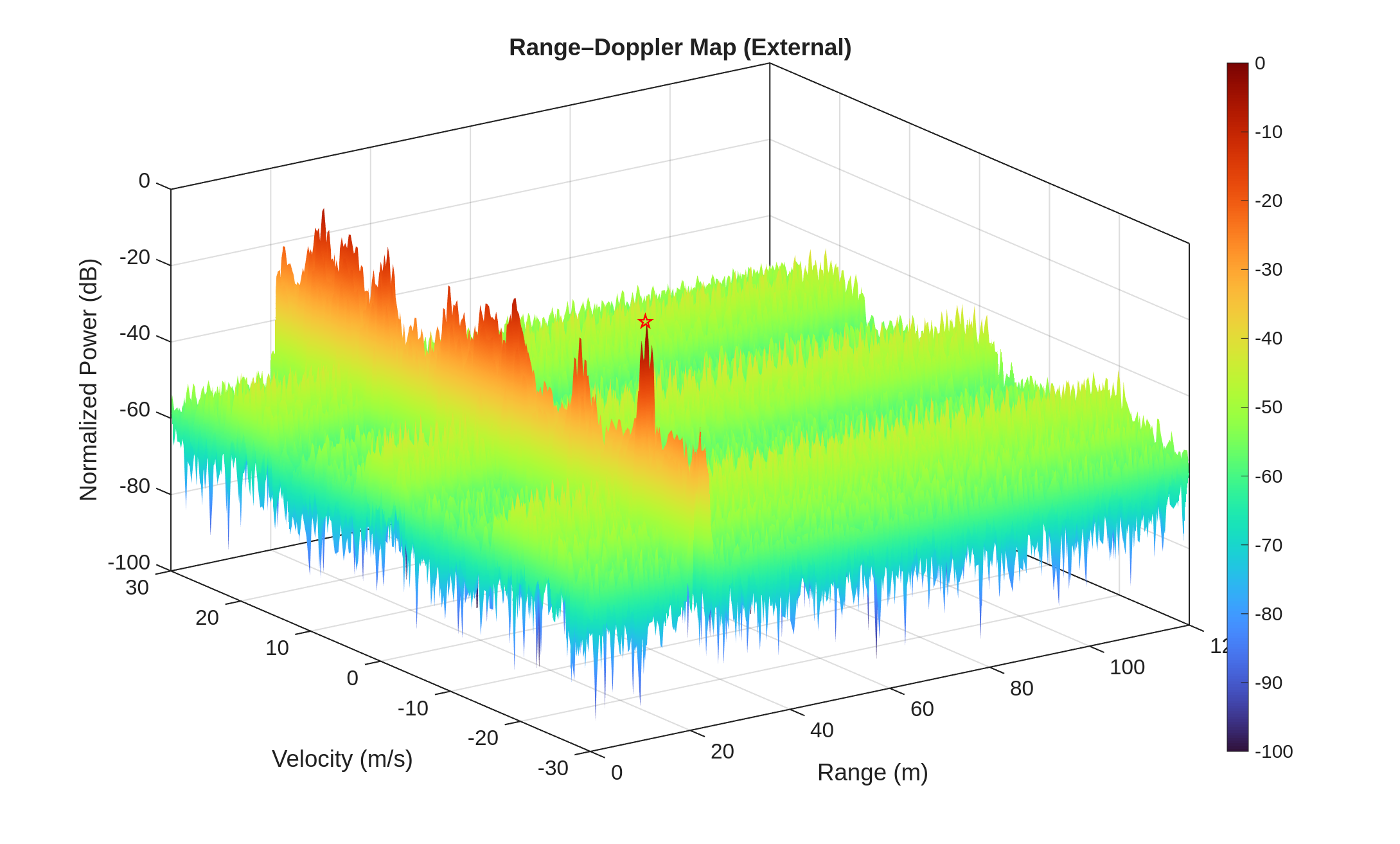}
    
\end{subfigure}
\hfill
\begin{subfigure}[b]{0.487\textwidth}
    \centering
    \includegraphics[width=\textwidth]{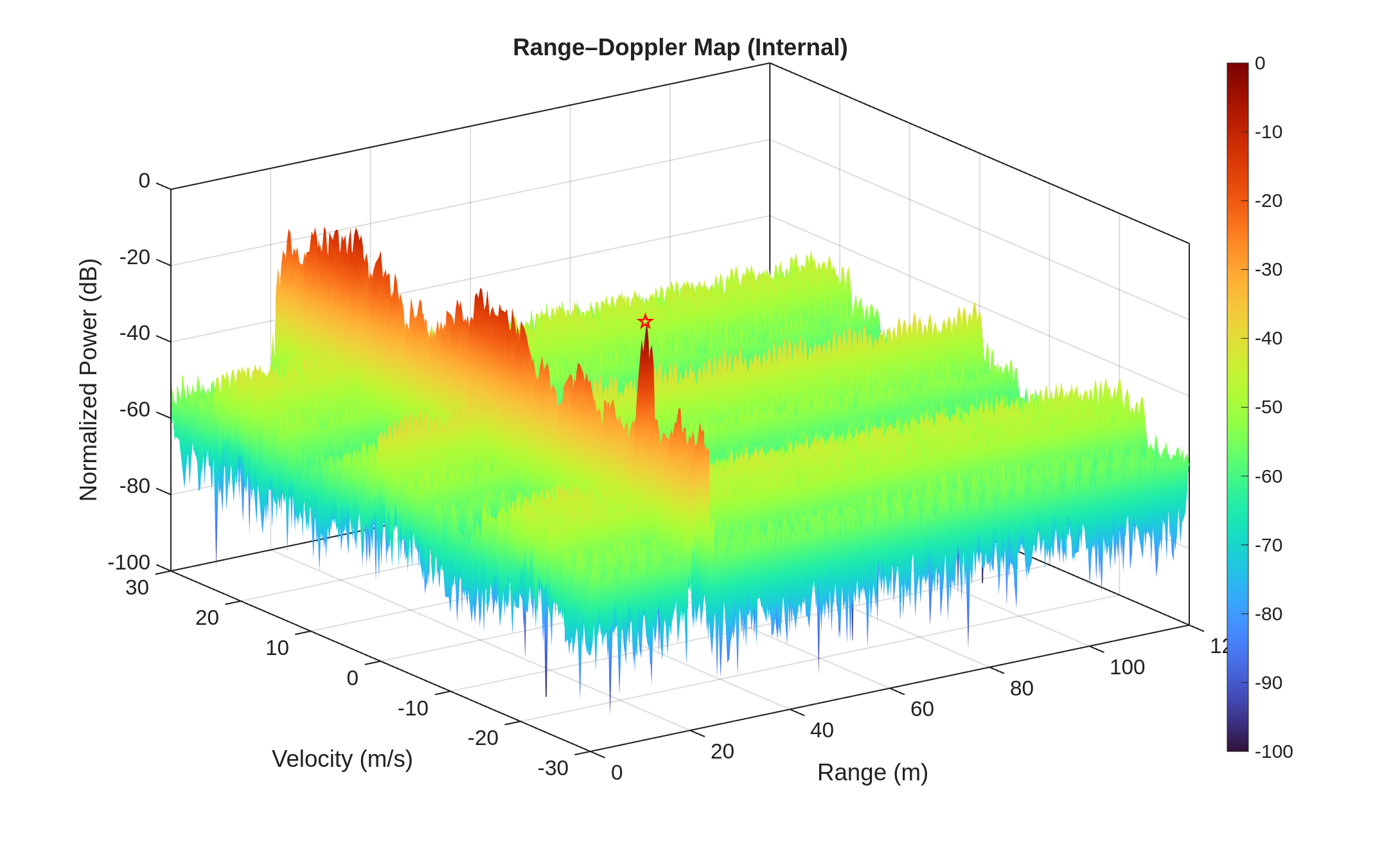}

\end{subfigure}
\caption{The final CPI RD maps for an individual radar operating with different methods.}
\label{fig:rdmap}
\end{figure*}


Fig.~\ref{fig:rdmap} presents the final converged RD maps for a representative radar under the different scheduling methods. The results clearly illustrate the distinctive behaviors of each approach. The random method fails to avoid interference, leading to a significant elevation of the noise floor across the RD map and causing the target response to be entirely submerged. The Nash method, although theoretically free of collisions, exhibits noticeable Doppler aliasing. This artifact arises from the deterministic periodic frequency assignment produced by its round-robin strategy (Algorithm~\ref{algo:sub1}). We note that a centrally coordinated scheme could, in principle, design a non-periodic hopping pattern to mitigate this aliasing; however, such designs require global knowledge of all radars' parameters and explicit scheduling, which is impractical in decentralized automotive settings. In contrast, the regret-based methods deliver substantially improved performance without any central coordination. Both external and internal regret minimization successfully learn interference-avoiding patterns.

Regarding collision statistics, the Nash method attains a collision rate of zero, consistent with its design. The two regret-based methods progressively reduce their collision rates over training, with the internal regret method driving the rate to zero within approximately 12 epochs. The random method, by comparison, maintains a persistently high collision rate due to its independent uniform sampling of time–frequency actions.

\subsubsection{SINR Performance}
We extended the radar configuration to scenarios of 3-7 radars, and performed 50 Monte-Carlo (MC) trials for each algorithm in each scenario, recording the average SINR across radars and the standard deviation across trials. The results are plotted in Fig.~\ref{fig:sinr_comparison}.
\begin{figure}[htbp]
    \centering
    \includegraphics[width=\linewidth]{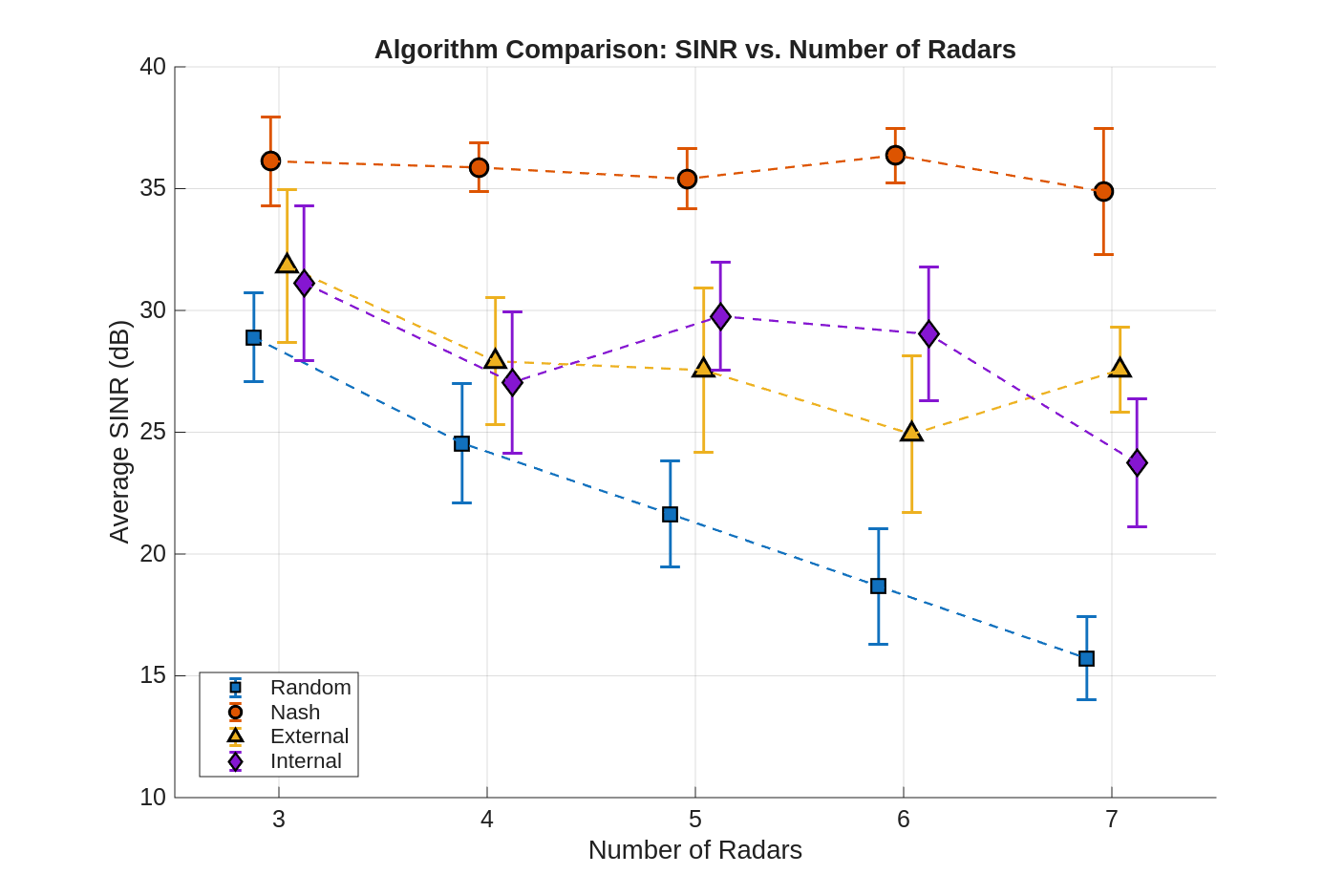}
    \caption{Average SINR comparison for all methods. The error bars represent the standard deviation of the per-trial mean SINR, computed across all radars.}    \label{fig:sinr_comparison}
\end{figure}

We observe that the random baseline produces the lowest SINR (approx. 28, 25, 22, 19, and 16 dB for 3--7 radars), a result inherent to its time-shifting design. The Nash joint method achieves the highest per-chirp SINR due to its deterministic collision-free assignment; however, as discussed above, it introduces Doppler aliasing artifacts that degrade range-Doppler map quality. The regret-based methods achieve competitive SINR that is slightly below Nash in the per-chirp metric, while avoiding aliasing through their stochastic hopping patterns. Between the two regret minimization approaches, no explicit performance advantage was evident.

Overall, both regret-based methods reliably learn collision-free time–frequency patterns and deliver competitive SINR and clearer RD maps than the randomized baseline, while matching Nash's collision-free performance without requiring any centralized coordination, demonstrating the effectiveness of joint time–frequency no-regret scheduling in dense multi-radar environments.

\subsection{Dynamic Traffic Scenarios}
\label{sec:dynamic_scenarios}

To evaluate the framework under realistic automotive conditions, we consider two dynamic traffic scenarios: an \emph{urban intersection} and a \emph{highway} with oncoming traffic. Both scenarios feature time-varying interference graphs, multi-target detection (each radar detects the nearest vehicle), and dynamic radar entry, testing the framework well beyond the static 4-radar setting above.

\subsubsection{Scenario configuration}
Fig.~\ref{fig:scenario_geometry} illustrates both scenarios. Five vehicles equipped with front-facing 77~GHz FMCW radars approach from the four cardinal directions at 30--50~km/h; one additional vehicle enters from a southwest side street at epoch~15 (of 30 total) with a cold-start uniform strategy. In the highway scenario, inspired by the traffic model of Cuccoli et al.~\cite{cuccoli2025highway}, six vehicles share a multi-lane road: four eastbound (100--130~km/h) and two oncoming westbound (100--120~km/h), producing closing speeds up to 260~km/h. Two new vehicles (an overtaker and an on-ramp merger) enter at epoch~15. Table~\ref{tab:dynamic_params} summarizes the parameters. In both scenarios, the interference graph is recomputed each CPI based on a 200~m distance threshold, and each radar detects the nearest other vehicle as its primary target.

\begin{figure*}[t]
\centering
\begin{subfigure}[b]{0.48\textwidth}
    \centering
    \includegraphics[width=\linewidth]{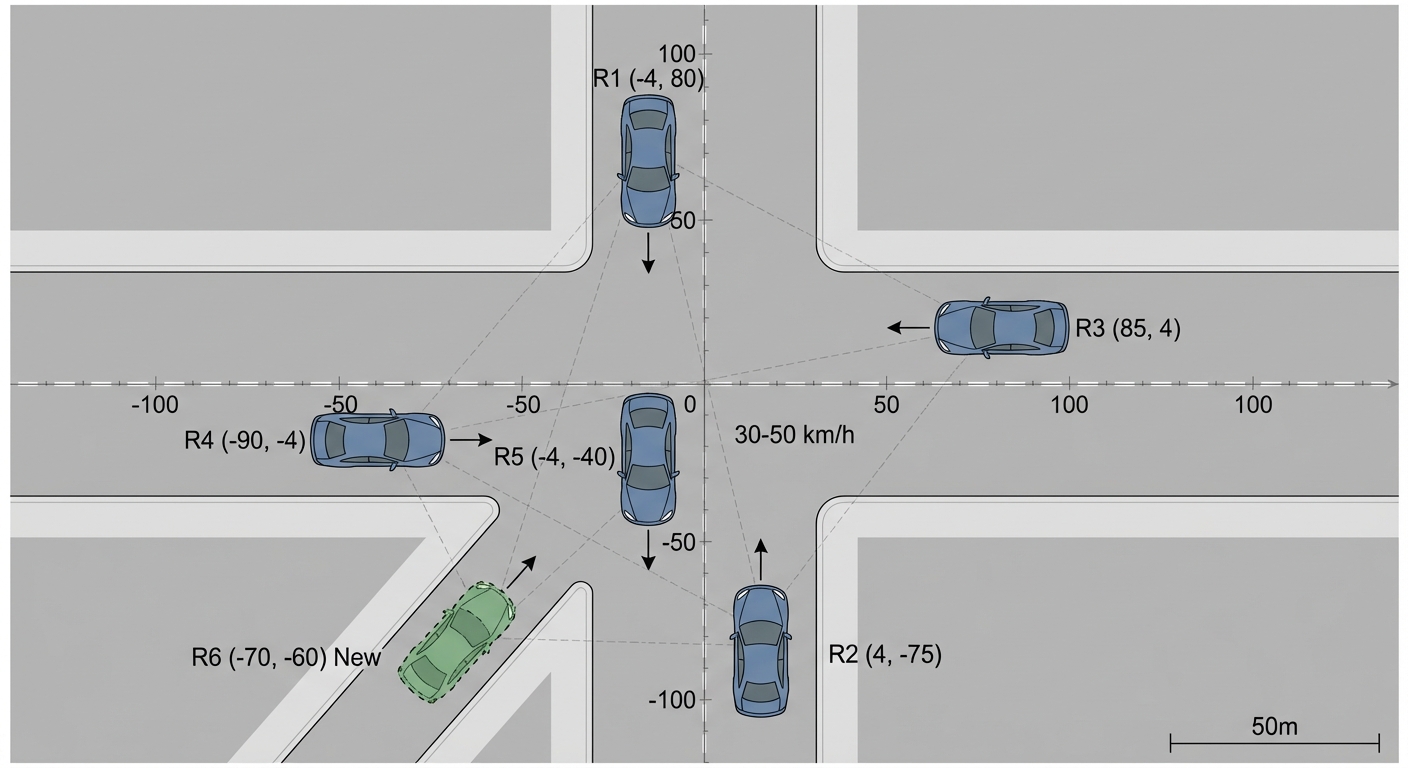}
    \caption{Urban intersection}
    \label{fig:urban_geometry}
\end{subfigure}
\hfill
\begin{subfigure}[b]{0.48\textwidth}
    \centering
    \includegraphics[width=\linewidth]{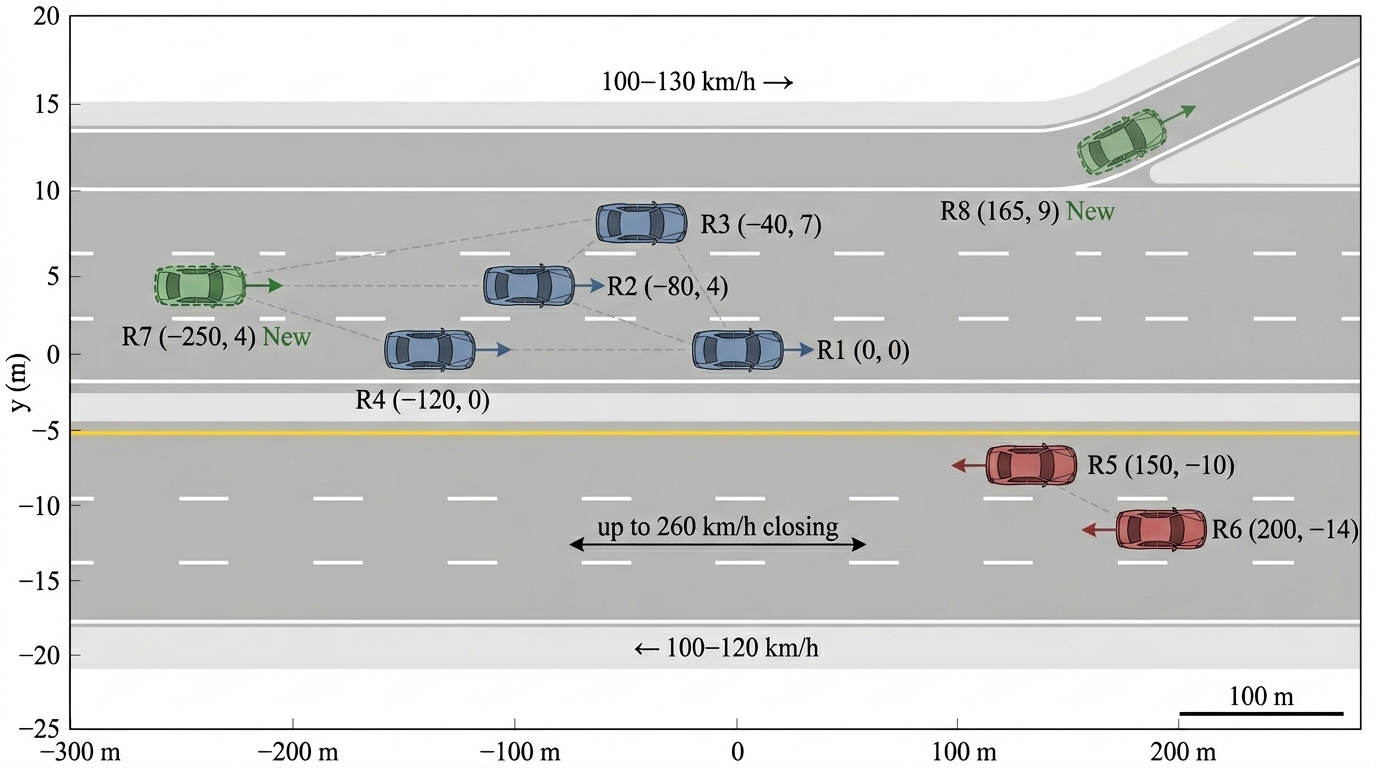}
    \caption{Highway}
    \label{fig:highway_geometry}
\end{subfigure}
\caption{Dynamic traffic scenarios. (a)~Urban intersection: five vehicles (blue) approach from four directions; a new vehicle (green, dashed) enters from the southwest at epoch~15. (b)~Highway: four eastbound (blue) and two oncoming westbound (red) vehicles; two new vehicles (green, dashed) enter at epoch~15. Dashed gray lines indicate interference links (distance $\leq 200$\,m).}
\label{fig:scenario_geometry}
\end{figure*}

\begin{table}[t]
\centering
\caption{Dynamic Scenario Parameters}
\label{tab:dynamic_params}
\begin{tabular}{lcc}
\hline
\textbf{Parameter} & \textbf{Urban} & \textbf{Highway} \\
\hline
Initial / final radars & 5 / 6 & 6 / 8 \\
Vehicle speeds (km/h) & 30--50 & 100--130 / 100--120 \\
Max closing speed (km/h) & $\approx$50 & $\approx$260 \\
Inter-vehicle ranges (m) & 35--175 & 40--300 \\
Interference cutoff (m) & \multicolumn{2}{c}{200} \\
New radar entry epoch & \multicolumn{2}{c}{15 (of 30)} \\
Total epochs (CPIs) & \multicolumn{2}{c}{30} \\
CPI duration (ms) & \multicolumn{2}{c}{$\approx$7.68} \\
Target RCS (dBsm) & \multicolumn{2}{c}{20} \\
TX power $P_t$ / Gain $G^2$ & \multicolumn{2}{c}{23\,dBm / 46\,dBi} \\
\hline
\end{tabular}
\end{table}

\subsubsection{Collision rate and SINR convergence}
Fig.~\ref{fig:dynamic_results} presents the interference-aware collision rate and mean SINR across all 30~epochs for both scenarios. The External and Internal methods both drive collision rates toward zero within the first 10--12~epochs (Fig.~\ref{fig:dynamic_collision}). At epoch~15, the entry of new radars causes a transient spike, which is resolved within 3--5~epochs as the new radars learn to anti-coordinate with their neighbors. The mean SINR (Fig.~\ref{fig:dynamic_sinr}) improves correspondingly as collision rates decrease, improving toward the interference-free baseline. In the highway scenario, the sparser interference graph leads to higher baseline SINR, while the urban scenario's denser connectivity presents a harder coordination challenge.

\begin{figure*}[t]
\centering
\begin{subfigure}[b]{0.48\textwidth}
    \centering
    \includegraphics[width=\linewidth]{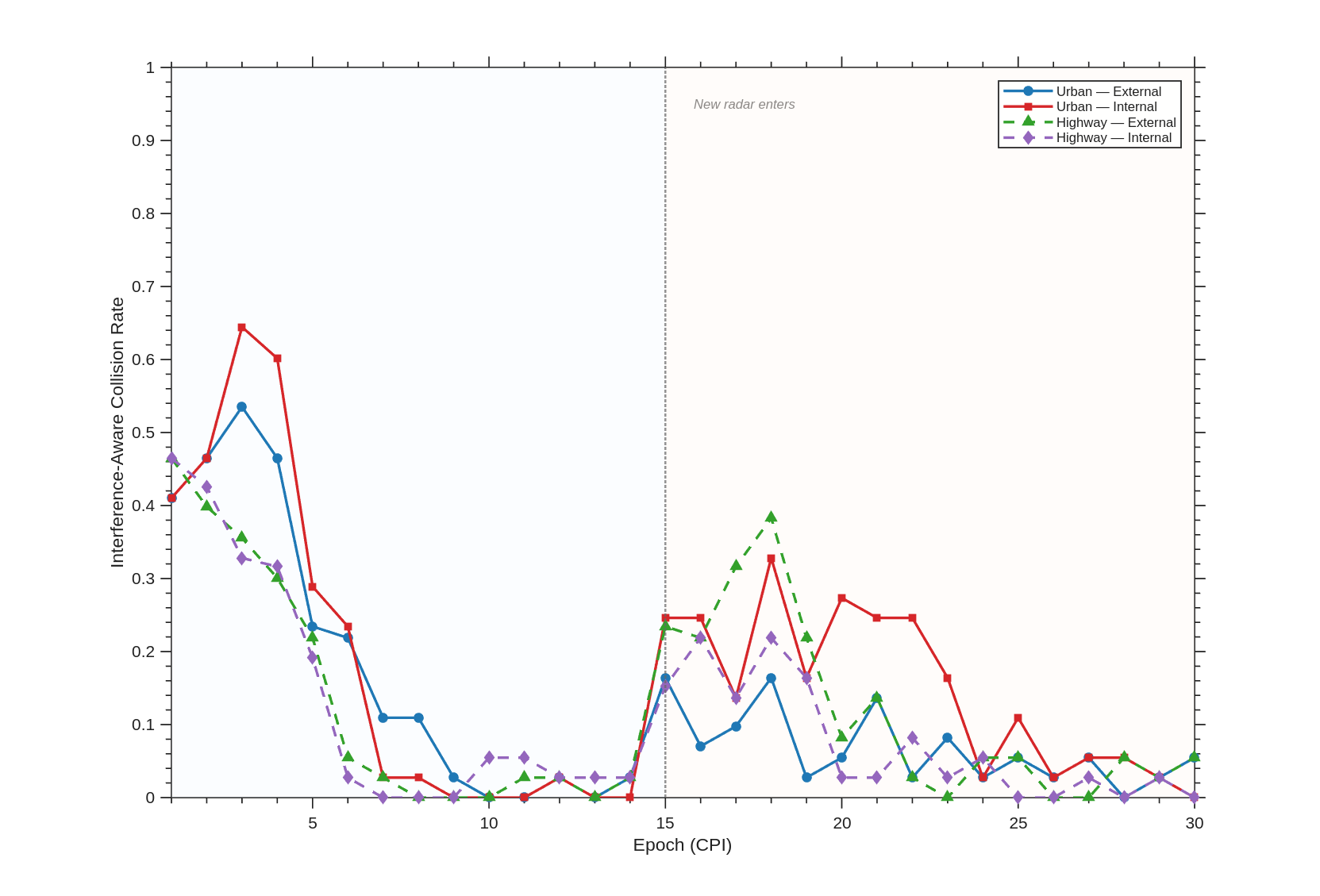}
    \caption{Collision rate vs.\ epoch}
    \label{fig:dynamic_collision}
\end{subfigure}
\hfill
\begin{subfigure}[b]{0.48\textwidth}
    \centering
    \includegraphics[width=\linewidth]{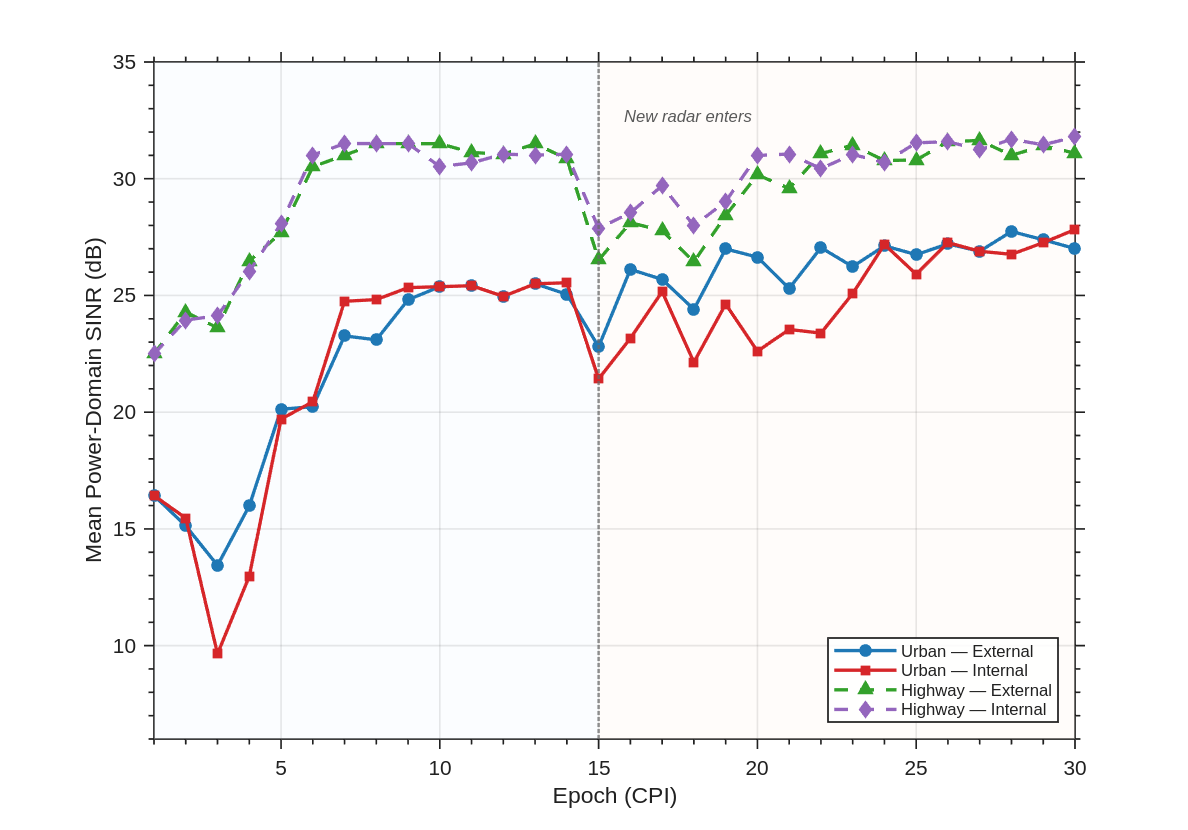}
    \caption{Mean SINR vs.\ epoch}
    \label{fig:dynamic_sinr}
\end{subfigure}
\caption{Dynamic scenario results. (a)~Interference-aware collision rate and (b)~mean SINR across radars for both urban and highway scenarios. The dashed vertical line marks new radar entry at epoch~15.}
\label{fig:dynamic_results}
\end{figure*}

\subsubsection{Timing analysis}
Each CPI spans $K \times T_\text{PRI} = 256 \times 29.99\,\mu\text{s} \approx 7.68$\,ms, so 30~epochs of learning consume approximately 230~ms of real time. Even in the fastest scenario (highway oncoming at 260~km/h closing speed over 200~m, encounter duration $\approx$2.8~s), the algorithm converges in less than 8\% of the encounter. For urban intersections (closing at $\sim$50~km/h over 150~m, encounter $\approx$10.8~s), convergence uses approximately 2\% of the available time. During this window, inter-vehicle positions change modestly relative to the 200~m interference range, although the interference graph is recomputed every CPI to reflect the current topology.

When new radars enter at epoch~15, the existing learned strategies are transiently perturbed but recover within 3--5~epochs. The new radars themselves converge from cold-start initialization, validating the framework's adaptability in dynamic traffic.

\section{Conclusion}
\label{sec:conclusion}
We have introduced a game-theoretic framework for proactive interference avoidance in automotive FMCW radar networks, generalizing previous frequency-hopping schemes to include time-domain shifts.  
By modeling joint time–frequency scheduling as a repeated anti-coordination game, we developed a decentralized no-regret learning algorithm that enables each radar to autonomously adapt chirp start times and subband selections without communication.
We proved sublinear external and internal regret bounds and established convergence of the empirical play to coarse correlated or correlated equilibria.
Numerical experiments demonstrated that leveraging both temporal and spectral degrees of freedom significantly reduces collision rates and improves SINR and range–Doppler fidelity compared to randomized time-frequency dithering and centralized Nash-based strategies. These results highlight the practicality and scalability of no-regret time–frequency scheduling for dense automotive radar scenarios. Future work will explore heterogeneous radar capabilities, more realistic channel models, hardware-in-the-loop validation on production platforms, formal detection performance analysis (probability of detection $P_d$ and false alarm rate $P_{fa}$) under the learned scheduling policies, and integration with large-scale traffic simulators such as SUMO for statistical evaluation across diverse traffic densities.

\section*{Acknowledgment}
The authors used generative AI models gemini-2.5-pro and ChatGPT-5 for minor language editing. All technical content, analysis, and conclusions were developed by the authors.

\appendix
\begin{proof}[Proof of Theorem~\ref{thm:bandit_ext_regret}]
Let $w_\tau(\sigma):=\exp\{z^i_\tau(\sigma)\}$ and $W_\tau:=\sum_\sigma w_\tau(\sigma)$, so $\tilde p^i_\tau(\sigma)=w_\tau(\sigma)/W_\tau$.
By the update rule and setting $\eta\le\gamma/|\Sigma^i|$, we have $\eta  \widehat U^i_\tau(\sigma)\in[0,1]$ (since $\widehat U^i_\tau(\sigma)\le 1/p^i_\tau(\sigma_\tau)\le |\Sigma^i|/\gamma$).
For any $x\in[0,1]$, $e^x\le 1+x+x^2$. Hence,
\begin{align*}
W_{\tau+1}
&=\sum_\sigma w_\tau(\sigma)  \exp\{\eta  \widehat U^i_\tau(\sigma)\}
\\ 
& \le\
\sum_\sigma w_\tau(\sigma)\Big(1+\eta  \widehat U^i_\tau(\sigma)+\eta^2  \widehat U^i_\tau(\sigma)^2\Big)\\
&=
W_\tau\Big(1+\eta  \langle \tilde p^i_\tau,\widehat U^i_\tau\rangle+\eta^2  \sum_\sigma \tilde p^i_\tau(\sigma)  \widehat U^i_\tau(\sigma)^2\Big).
\end{align*}
Taking logs and using $\log(1+x)\le x$,
\begin{equation}\label{eq:logW_increase}
\begin{aligned}
   \log W_{\tau+1}-\log W_\tau
 \le\
 \eta \langle \tilde p^i_\tau,\widehat U^i_\tau\rangle
+\eta^2 \sum_\sigma \tilde p^i_\tau(\sigma)  \widehat U^i_\tau(\sigma)^2.
\end{aligned}
\end{equation}
On the other hand, for any fixed comparator $\sigma^\star$,
\[
\begin{aligned}
   \log \frac{w_{\mathcal{T}+1}(\sigma^\star)}{W_1}
 =\sum_{\tau=1}^{\mathcal{T}} \eta  \widehat U^i_\tau(\sigma^\star)
-\sum_{\tau=1}^{\mathcal{T}} (\log \frac{W_{\tau+1}}{W_\tau} ).
\end{aligned}
\]
Combining with \eqref{eq:logW_increase} and using $W_1=|\Sigma^i|$ and $w_{\mathcal{T}+1}(\sigma^\star)\le W_{\mathcal{T}+1}$ gives
\begin{equation}\label{eq:key_step_unexpected}
\begin{aligned}
 \sum_{\tau=1}^{\mathcal{T}} \langle \tilde p^i_\tau,\widehat U^i_\tau\rangle - \widehat U^i_\tau(\sigma^\star)
 \le 
\frac{\log |\Sigma^i|}{\eta}
 +
\eta  \sum_{\tau=1}^{\mathcal{T}}\sum_\sigma \tilde p^i_\tau(\sigma)  \widehat U^i_\tau(\sigma)^2.
\end{aligned}
\end{equation}
We now take expectations w.r.t.\ the learner’s sampling.
By standard importance-weighting, and the unbiasedness of utility estimation
$
\mathbb{E}_{\tau-1}\big[\widehat U^i_\tau(\sigma)\big]= \mathbb{E}_{\tau-1}\big[\bar U^i_\tau(\sigma) \big] = U^i_\tau,$ 
$\mathbb{E}_{\tau-1}\big[\langle \tilde p^i_\tau,\widehat U^i_\tau\rangle\big]=\mathbb{E}_{\tau-1}\big[\langle \tilde p^i_\tau, \bar U^i_\tau\rangle\big].
$
For the quadratic term, note that only the played arm contributes. Therefore, when $\gamma \neq 0, \gamma \leq \gamma_{\max}$:
\[
\sum_\sigma \tilde p^i_\tau(\sigma)  \widehat U^i_\tau(\sigma)^2
=\tilde p^i_\tau(\sigma_\tau)  \frac{\bar U^i_\tau(\sigma_\tau)^2}{p^i_\tau(\sigma_\tau)^2}
\ \le\
\frac{1}{p^i_\tau(\sigma_\tau)}\ \le\ \frac{|\Sigma^i|}{\gamma},
\]
since $U^i_\tau(\sigma_\tau)\le 1$, and $p^i_\tau\ge \gamma  \mathbf{1}/|\Sigma^i|$. Therefore
\[
\mathbb{E}\left[\sum_{\tau=1}^{\mathcal{T}}\sum_\sigma \tilde p^i_\tau(\sigma)  \widehat U^i_\tau(\sigma)^2\right]
\ \le\
\frac{\mathcal{T}  |\Sigma^i|}{\gamma}.
\]
Taking expectations in \eqref{eq:key_step_unexpected} and using
$\langle p^i_\tau,U^i_\tau\rangle=(1-\gamma)\langle \tilde p^i_\tau,U^i_\tau\rangle+\gamma\langle \mathbf{1}/|\Sigma^i|,U^i_\tau\rangle
\ge (1-\gamma)\langle \tilde p^i_\tau,U^i_\tau\rangle$ (hence
$\langle \tilde p^i_\tau,U^i_\tau\rangle\le \langle p^i_\tau,U^i_\tau\rangle+\gamma$),
we obtain
\begin{align*}
 \mathbb{E} \left[\sum_{\tau=1}^{\mathcal{T}} U^i_\tau(\sigma^\star) -  \langle p^i_\tau,U^i_\tau\rangle \right]
  \le 
\frac{\log |\Sigma^i|}{\eta}
+
\frac{\eta  \mathcal{T}  |\Sigma^i|}{\gamma}
+
\gamma \mathcal{T}.
\end{align*}
Rearranging gives \eqref{eq:regret_bound_main}. Optimizing the RHS over $\eta$ and then $\gamma$
yields the stated $\mathcal{O}\big(\mathcal{T}^{2/3}|\Sigma^i|^{1/3} (\log |\Sigma^i|)^{1/3}\big)$ bound. 

When $\gamma = 0$, 
\[
\begin{aligned}
    \sum_\sigma \tilde p^i_\tau(\sigma)  \widehat U^i_\tau(\sigma)^2
\leq (1 - \bar{U}^i_\tau )^2 + \sum_{\sigma^i \neq \sigma^i_\tau} \bar{U}^i_\tau (\sigma^i, \sigma^{-i}_\tau) \leq |\Sigma^i|. 
\end{aligned}
\]
Taking expectations in \eqref{eq:key_step_unexpected},  we obtain
\begin{equation*}
     \mathbb{E} \left[ \sum_{\tau=1}^{\mathcal{T}} U^i_\tau(\sigma^\star) -  \langle p^i_\tau,U^i_\tau\rangle \right]
  \le 
\frac{\log |\Sigma^i|}{\eta}
+
\eta  \mathcal{T}  |\Sigma^i| , 
\end{equation*}
Optimizing the RHS over $\eta$ and then $\gamma$
yields the stated $\mathcal{O}\big(\sqrt{\mathcal{T}|\Sigma^i|\log |\Sigma^i|}\big)$ bound. 
\end{proof}

\begin{proof}[Proof of Theorem~\ref{thm:swap_regret}]

We show the stated bound on the internal regret for a fixed radar $i$.
For each source action $\sigma\in\Sigma^i$, we define the \emph{row-wise} vector:
\begin{equation*}
    v^i_{\tau, \sigma} (\sigma^{\prime}) := p^i_\tau(\sigma) (\sigma^{\prime}) {U}^i_\tau(\sigma^{\prime}),
\end{equation*}
for every pure strategy $\sigma^i$ where $U^i_\tau(\cdot)\in[0,1]^{|\Sigma^i|}$ is the true utility vector at CPI $\tau$ for player $i$, and $p^i_\tau\in\Delta(\Sigma^i)$ is the mixed strategy used at that CPI.
Now, the algorithm has access only to bandit feedback via the estimator $\widehat U^i_\tau$ of $U^i_\tau$, according to which we can define estimator $\widehat v^i_\tau (\sigma^{\prime}):= p^i_\tau (\sigma) \hat{U} (\sigma^{\prime})$
We claim that $v^i_{\tau,\sigma}(\sigma')$ satisfies the following property: $\mathbb{E}_{\tau-1}[\widehat{v} (\sigma^{\prime}) ] = p^i_\tau U_\tau(\sigma^{\prime})$

For each fixed source $\sigma\in\Sigma^i$, Subroutine
~\ref{algo:sub2_int} runs a bandit external optimizer on the row game
with reward vectors $v^i_{\tau,\sigma}$, with utility estimator $\widehat{v}^i_{\tau,\sigma}$. 
Let $q^i_{\tau,\sigma}\in\Delta(\Sigma^i)$ denote the row strategy at CPI $\tau$ for source $\sigma$, applying
Theorem~\ref{thm:bandit_ext_regret} \emph{row-wise} yields: for any fixed $\sigma\in\Sigma^i$ and any comparator $\sigma'\in\Sigma^i$,
\begin{equation*}
\begin{aligned}
    \mathbb{E}\Bigg[
\sum_{\tau=1}^{\mathcal{T}}
v^i_{\tau,\sigma}(\sigma') -  \langle q^i_{\tau,\sigma}, v^i_{\tau,\sigma}\rangle
\Bigg] 
 & \le \frac{\log |\Sigma^i|}{\eta}
+
\frac{\eta |\Sigma^i|}{\gamma} \mathcal{T} + {\gamma} \mathcal{T}, \\
\text{ or when } \gamma \equiv 0 \qquad &  \le \frac{\log |\Sigma^i|}{\eta} + \eta \mathcal{T}|\Sigma^i| ,
\end{aligned}
\end{equation*}
where we have taken a constant learning rate $\eta_\tau\equiv\eta$ and exploration rate
$\gamma_\tau \equiv \min \{ \gamma_{\max}, \gamma\}$ for simplicity. 

Invoking the external-to-swap regret reduction of Blum \& Mansour~\cite[Theorem~5]{blum07externalinternal},  we obtain,
\begin{align*}
   & \quad \mathcal{R}^i_{\mathrm{int}}(\mathcal{T})   =  \max_{\phi} \sum_{\tau=1}^\mathcal{T} \langle  p^i_\tau (\cdot)  , U^i_\tau(\phi (\cdot) )  - U^i_\tau (\cdot) \rangle \\ 
    & = \max_{\phi}  \sum_{\tau=1}^\mathcal{T}  \langle  p^i_\tau (\cdot)  , Q^i_{\tau} U^i_\tau(\phi (\cdot) )  - Q^i_\tau U^i_\tau (\cdot) \rangle \\
    & = \max_{\phi}\sum_{\sigma \in \Sigma^i} \sum_{\tau=1}^{\mathcal{T}}  p^i_\tau(\sigma) \left( U^i_\tau (\phi(\sigma)) - \langle q_{\tau, \sigma}  (\cdot) , U^i_\tau(\cdot) \rangle \right) \\
    & \le \sum_{\sigma \in \Sigma^i} \max_{\sigma^{\prime}}\sum_{\tau=1}^{\mathcal{T}}  \left( v^i_\tau (\sigma^{\prime}) - \langle q_{\tau, \sigma} (\cdot) , v^i_{\tau, \sigma}(\cdot) \rangle  \right),
\end{align*}
where we eliminate the $\max_\phi$ by exchange the order of the inner product and summation over $\mathcal{T}$. Now each source pure strategy $\sigma$ corresponds to an external minimizer.

Therefore, by applying the row-wise result, we arrive at
\begin{equation*}
\begin{aligned}
\mathbb{E}\big[\mathcal{R}^i_{\mathrm{int}}(\mathcal{T})\big]
& \le
|\Sigma^i|\left(
\frac{\log |\Sigma^i| }{\eta}
+ \frac{\eta |\Sigma^i|}{\gamma} \mathcal{T} + {\gamma} \mathcal{T}
\right), \text{ or } \\
\mathbb{E}\big[\mathcal{R}^i_{\mathrm{int}}(\mathcal{T})\big] 
& \le
|\Sigma^i|\left(
\frac{\log |\Sigma^i| }{\eta}
+ \eta |\Sigma^i| \mathcal{T}
\right),
\end{aligned}
\end{equation*}
where we simply apply the fact that $p^i_\tau \le 1$, and use the result for the external regret bound. Now the results are almost immediate, e.g., treating $\gamma$ as a constant and optimizing over $\eta$,  we have $\eta = \sqrt{ \frac{ \gamma \log |\Sigma^i|}{(|\Sigma^i| \mathcal{T})}}$. Plugging the $\eta$ expression back in and optimizing $\gamma$ we obtain $\gamma = \left(\sqrt{\frac{|\Sigma^i| \log |\Sigma^i|}{\mathcal{T}} }\right)^{2/3}$, hence choosing $\eta = (\log |\Sigma^i|)^{2/3}|\Sigma^i|^{-1/3} \mathcal{T}^{-2/3}$ yields the stated internal regret bound. 

\end{proof}

\bibliographystyle{IEEEtran}
\bibliography{IEEEref}

@INPROCEEDINGS{bechter2017automotive_intf,
  author={Bechter, J. and Waldschmidt, C.},
  booktitle={2017 European Radar Conference (EURAD)},
  title={Automotive Radar Interference Mitigation by Reconstruction and Cancellation},
  year={2017},
  pages={90--93},
  doi={10.23919/EURAD.2017.8249154}}

@misc{jun2022radar,
  title={Radar communication with interference suppression},
  author={Wu, Ryan Haoyun and Li, Jun and Brett, Maik and Staudenmaier, Michael Andreas},
  year={2022},
  month=nov # "~3",
  publisher={Google Patents},
  note={{US} Patent App. 17/245,613}
}

@INPROCEEDINGS{li2024performance,
  author={Li, Jun and Youn, Jihwan and Wu, Ryan and Overdevest, Jeroen and Sun, Shunqiao},
  booktitle={2024 IEEE International Conference on Acoustics, Speech, and Signal Processing Workshops (ICASSPW)}, 
  title={Performance Evaluation and Analysis of Thresholding-Based Interference Mitigation for Automotive Radar Systems}, 
  year={2024},
  volume={},
  number={},
  pages={204-208},
  keywords={Time-frequency analysis;Correlation;Prevention and mitigation;Radar detection;Surgery;Interference;Radar;Automotive radar;interference mitigation;time-frequency analysis;CFAR detection},
  doi={10.1109/ICASSPW62465.2024.10627325}}

@INPROCEEDINGS{jeroen19,
  author={Laghezza, Francesco and Jansen, Feike and Overdevest, Jeroen},
  booktitle={2019 20th International Radar Symposium (IRS)}, 
  title={Enhanced Interference Detection Method in Automotive {FMCW} Radar Systems}, 
  year={2019},
  volume={},
  number={},
  pages={1-7},
  keywords={Interference;Radar;Floors;Frequency modulation;Dynamic range;Receivers;Cutoff frequency},
  doi={10.23919/IRS.2019.8767459}}

@article{blum07externalinternal,
  author  = {Avrim Blum and Yishay Mansour},
  title   = {From External to Internal Regret},
  journal = {Journal of Machine Learning Research},
  year    = {2007},
  volume  = {8},
  number  = {47},
  pages   = {1307--1324},
  url     = {http://jmlr.org/papers/v8/blum07a.html}
}

@article{hart2000simple,
  title={A simple adaptive procedure leading to correlated equilibrium},
  author={Hart, Sergiu and Mas-Colell, Andreu},
  journal={Econometrica},
  volume={68},
  number={5},
  pages={1127--1150},
  year={2000},
  publisher={Wiley Online Library}
}

@INPROCEEDINGS{Mun20,
  author={Mun, Jiwoo and Ha, Seokhyeon and Lee, Jungwoo},
  booktitle={ICASSP 2020 - 2020 IEEE International Conference on Acoustics, Speech and Signal Processing (ICASSP)}, 
  title={Automotive Radar Signal Interference Mitigation Using {RNN} with {Self Attention}}, 
  year={2020},
  volume={},
  number={},
  pages={3802-3806},
  doi={10.1109/ICASSP40776.2020.9053013}}

@misc{stettiner2023fmcw,
  title={{FMCW} automotive radar incorporating nonlinear frequency hopping sequence of fractional bandwidth multiband chirps},
  author={Stettiner, Yoram and Arkind, Noam},
  year={2023},
  month=mar # "~21",
  publisher={Google Patents},
  note={{US} Patent 11,609,303}
}

@INPROCEEDINGS{pan2024variationalinterpretationmirrorplay,
  author={Pan, Yunian and Li, Tao and Zhu, Quanyan},
  booktitle={2024 IEEE 63rd Conference on Decision and Control (CDC)}, 
  title={On the Variational Interpretation of Mirror Play in Monotone Games}, 
  year={2024},
  volume={},
  number={},
  pages={6799-6804},
  keywords={Geometry;Couplings;Costs;Games;Differential games;Nash equilibrium;Mirrors;Convergence},
  doi={10.1109/CDC56724.2024.10885800}
}

@article{shutian23erm, 
year = {2023}, 
title = {Game-Theoretic Distributed Empirical Risk Minimization With Strategic Network Design}, 
author = {Liu, Shutian and Li, Tao and Zhu, Quanyan}, 
journal = {IEEE Transactions on Signal and Information Processing over Networks}, 
issn = {2373-776X}, 
doi = {10.1109/tsipn.2023.3306106}, 
pages = {542--556}, 
volume = {9}, 
keywords = {}
}

@INPROCEEDINGS{tao23cola,
  author={Li, Tao and Lei, Haozhe and Zhu, Quanyan},
  booktitle={2023 IEEE International Conference on Robotics and Automation (ICRA)}, 
  title={Self-Adaptive Driving in Nonstationary Environments through Conjectural Online Lookahead Adaptation}, 
  year={2023},
  volume={},
  number={},
  pages={7205-7211},
  keywords={Representation learning;Metalearning;Heuristic algorithms;Source coding;Lighting;Reinforcement learning;Manuals},
  doi={10.1109/ICRA48891.2023.10161368}}

@article{tao22confluence, 
year = {2022}, 
title = {The Confluence of Networks, Games, and Learning a Game-Theoretic Framework for Multiagent Decision Making Over Networks}, 
author = {Li, Tao and Peng, Guanze and Zhu, Quanyan and Ba\c{s}ar, Tamer}, 
journal = {IEEE Control Systems}, 
issn = {1066-033X}, 
doi = {10.1109/mcs.2022.3171478}
}

@inproceedings{pan2023resilience,
  title={On the resilience of traffic networks under non-equilibrium learning},
  author={Pan, Yunian and Li, Tao and Zhu, Quanyan},
  booktitle={2023 American Control Conference (ACC)},
  pages={3484--3489},
  year={2023},
  organization={IEEE}
}

@INPROCEEDINGS{pan-tao23delay,
  author={Pan, Yunian and Li, Tao and Zhu, Quanyan},
  booktitle={2023 62nd IEEE Conference on Decision and Control (CDC)}, 
  title={Is Stochastic Mirror Descent Vulnerable to Adversarial Delay Attacks? {A} Traffic Assignment Resilience Study}, 
  year={2023},
  volume={},
  number={},
  pages={8328-8333},
  keywords={Systematics;Navigation;Transportation;Probabilistic logic;Delays;Trajectory;Mirrors},
  doi={10.1109/CDC49753.2023.10384003}}

@article{thecomplexityofne,
	abstract = { In 1951, John F. Nash proved that every game has a Nash equilibrium [Ann. of Math. (2), 54 (1951), pp. 286--295]. His proof is nonconstructive, relying on Brouwer's fixed point theorem, thus leaving open the questions, Is there a polynomial-time algorithm for computing Nash equilibria? And is this reliance on Brouwer inherent? Many algorithms have since been proposed for finding Nash equilibria, but none known to run in polynomial time. In 1991 the complexity class PPAD (polynomial parity arguments on directed graphs), for which Brouwer's problem is complete, was introduced [C. Papadimitriou, J. Comput. System Sci., 48 (1994), pp. 489--532], motivated largely by the classification problem for Nash equilibria; but whether the Nash problem is complete for this class remained open. In this paper we resolve these questions: We show that finding a Nash equilibrium in three-player games is indeed PPAD-complete; and we do so by a reduction from Brouwer's problem, thus establishing that the two problems are computationally equivalent. Our reduction simulates a (stylized) Brouwer function by a graphical game [M. Kearns, M. Littman, and S. Singh, Graphical model for game theory, in 17th Conference in Uncertainty in Artificial Intelligence (UAI), 2001], relying on ``gadgets,'' graphical games performing various arithmetic and logical operations. We then show how to simulate this graphical game by a three-player game, where each of the three players is essentially a color class in a coloring of the underlying graph. Subsequent work [X. Chen and X. Deng, Setting the complexity of 2-player Nash-equilibrium, in 47th Annual IEEE Symposium on Foundations of Computer Science (FOCS), 2006] established, by improving our construction, that even two-player games are PPAD-complete; here we show that this result follows easily from our proof. },
	author = {Daskalakis, Constantinos and Goldberg, Paul W. and Papadimitriou, Christos H.},
	doi = {10.1137/070699652},
	eprint = {https://doi.org/10.1137/070699652},
	journal = {SIAM Journal on Computing},
	number = {1},
	pages = {195-259},
	title = {The Complexity of Computing a Nash Equilibrium},
	url = {https://doi.org/10.1137/070699652},
	volume = {39},
	year = {2009},
	bdsk-url-1 = {https://doi.org/10.1137/070699652}}

@INPROCEEDINGS{9682998,
  author={Pan, Yunian and Zhu, Quanyan},
  booktitle={2021 60th IEEE Conference on Decision and Control (CDC)}, 
  title={Efficient Episodic Learning of Nonstationary and Unknown Zero-Sum Games Using Expert Game Ensembles}, 
  year={2021},
  volume={},
  number={},
  pages={1669-1676},
  keywords={Adaptation models;Upper bound;Parameter estimation;Heuristic algorithms;Process control;Games;Dynamic scheduling},
  doi={10.1109/CDC45484.2021.9682998}}

@book{roughgarden2010algorithmic,
  title={Algorithmic Game Theory},
  author={Roughgarden, Tim},
  journal={Communications of the ACM},
  volume={53},
  number={7},
  pages={78--86},
  year={2010},
  publisher={ACM New York, NY, USA}
}

@ARTICLE{Sun_SPM_2020,
  author={Sun, Shunqiao and Petropulu, Athina P. and Poor, H. Vincent},
  journal={IEEE Signal Processing Magazine}, 
  title={{MIMO} Radar for Advanced Driver-Assistance Systems and Autonomous Driving: Advantages and Challenges}, 
  year={2020},
  volume={37},
  number={4},
  pages={98-117},
  keywords={Automotive engineering;Radar antennas;Radar imaging;Chirp;MIMO radar;Doppler effect},
  doi={10.1109/MSP.2020.2978507}}

@ARTICLE{Sun_JSTSP_4D_2021,
  author={Sun, Shunqiao and Zhang, Yimin D.},
  journal={IEEE Journal of Selected Topics in Signal Processing}, 
  title={4{D} Automotive Radar Sensing for Autonomous Vehicles: A Sparsity-Oriented Approach}, 
  year={2021},
  volume={15},
  number={4},
  pages={879-891},
  keywords={Radar;Automotive engineering;Radar imaging;Radar cross-sections;Antenna arrays;Radar antennas;Estimation;Automotive radar;multi-input multi-output (MIMO) radar;autonomous driving;random sparse step-frequency waveform;interference mitigation;sparse array},
  doi={10.1109/JSTSP.2021.3079626}}

@ARTICLE{Lifan_Co_Chirps_TAES_2023,
  author={Xu, Lifan and Sun, Shunqiao and Mishra, Kumar Vijay and Zhang, Yimin D.},
  journal={IEEE Transactions on Aerospace and Electronic Systems}, 
  title={Automotive {FMCW} Radar With Difference Co-Chirps}, 
  year={2023},
  volume={59},
  number={6},
  pages={8145-8165},
  keywords={Radar;Chirp;Doppler effect;Doppler radar;Automotive engineering;Clutter;Estimation;Automotive radar;difference co-chirps;frequency-modulated continuous-wave (FMCW);nonuniform pulse repetition frequency;sparse reconstruction},
  doi={10.1109/TAES.2023.3299259}}

@book{Markel_book_2022,
author = "M. Markel",
title = "Radar for Fully Autonomous Driving",
publisher = "Boston, MA: Artech House",
year = "2022"
}

@book{bacsar1998dynamic,
  title={Dynamic Noncooperative Game Theory},
  author={Ba{\c{s}}ar, Tamer and Olsder, Geert Jan},
  year={1998},
  publisher={SIAM}
}

@ARTICLE{Ruxin_Radar_Imaging_TAES_2023,
  author={Zheng, Ruxin and Sun, Shunqiao and Liu, Hongshan and Wu, Teresa},
  journal={IEEE Transactions on Aerospace and Electronic Systems}, 
  title={Deep-Neural-Network-Enabled Vehicle Detection Using High-Resolution Automotive Radar Imaging}, 
  year={2023},
  volume={59},
  number={5},
  pages={4815-4830},
  keywords={Radar;Radar imaging;Radar detection;Doppler radar;Doppler effect;Laser radar;Radar antennas;Automotive radar;autonomous vehicles;deep neural network;radar object detection;radar spectra},
  doi={10.1109/TAES.2023.3275887}}

@INPROCEEDINGS{Google_Automotive_Interference_Mitigation_2016,
  author={Khoury, Joud and Ramanathan, Ram and McCloskey, Dan and Smith, Russell and Campbell, Timothy},
  booktitle={2016 13th Annual IEEE International Conference on Sensing, Communication, and Networking (SECON)}, 
  title={{RadarMAC}: Mitigating Radar Interference in Self-Driving Cars}, 
  year={2016},
  volume={},
  number={},
  pages={1-9},
  keywords={Radar;Automobiles;Radar antennas;Interference;Bandwidth;Autonomous automobiles;Time-frequency analysis},
  doi={10.1109/SAHCN.2016.7733011}}

@ARTICLE{Jin_FMCW_Interference_JSTSP_2021,
  author={Jin, Sian and Roy, Sumit},
  journal={IEEE Journal of Selected Topics in Signal Processing}, 
  title={{FMCW} Radar Network: Multiple Access and Interference Mitigation}, 
  year={2021},
  volume={15},
  number={4},
  pages={968-979},
  keywords={Interference;Chirp;Spaceborne radar;Discrete Fourier transforms;Radar detection;Radar signal processing;Media Access Protocol;FMCW radar;radar networks;MAC;interference detection and mitigation;interference modeling},
  doi={10.1109/JSTSP.2021.3071565}}

@INPROCEEDINGS{Yimin_Partioning_SAM_2024,
  author={Zhang, Yimin D. and Sun, Shunqiao},
  booktitle={2024 IEEE 13th Sensor Array and Multichannel Signal Processing Workshop (SAM)}, 
  title={Identical Partitioning of Consecutive Integer Set}, 
  year={2024},
  volume={},
  number={},
  pages={1-5},
  keywords={Array signal processing;Conferences;Estimation;Radar signal processing;Sensors;Radar applications;Sensor arrays;integer set partition;multi-dimensional partition;sparse array;sparse waveform design;automotive radar},
  doi={10.1109/SAM60225.2024.10636490}}

@ARTICLE{BLRC,
  author={Li, Jun and Wu, Ryan and Lu, I-Tai and Ren, Dongyin},
  journal={IEEE Transactions on Aerospace and Electronic Systems}, 
  title={Bayesian Linear Regression With Cauchy Prior and Its Application in Sparse MIMO Radar}, 
  year={2023},
  volume={59},
  number={6},
  pages={9576-9597},
  keywords={Bayes methods;Radar imaging;Array signal processing;Antenna arrays;MIMO communication;Linear regression;Doppler radar;Array signal processing;automotive radar;Cauchy;coprime array (CPA);high-resolution radar;sparse array (SPA);sparse Bayesian learning (SBL);sparse multiple-input and multiple-output (MIMO) array;sparse signal recovery (SSR)},
  doi={10.1109/TAES.2023.3321585}}

@INPROCEEDINGS{jihwan2024intf,
  author={Youn, Jihwan and Li, Jun and Wu, Ryan and Overdevest, Jeroen},
  booktitle={2024 21st European Radar Conference (EuRAD)}, 
  title={Interference Mitigation Evaluation Methodology for Automotive Radar}, 
  year={2024},
  volume={},
  number={},
  pages={115-118},
  keywords={Time-frequency analysis;Prevention and mitigation;Radar detection;Interference;Radar;Detectors;Probability density function;Time-domain analysis;Automotive engineering;Signal to noise ratio;interference mitigation;time-frequency domain analysis;radar-to-radar interference},
  doi={10.23919/EuRAD61604.2024.10734960}}

@INPROCEEDINGS{xinyi2024,
  author={Wei, Xinyi and Overdevest, Jeroen and Li, Jun and Youn, Jihwan and Ravindran, Satish and Van Sloun, Ruud J.G.},
  booktitle={2024 21st European Radar Conference (EuRAD)}, 
  title={Score-based Generative Modeling for Interference Mitigation in Automotive FMCW Radar}, 
  year={2024},
  volume={},
  number={},
  pages={27-30},
  keywords={Source separation;Prevention and mitigation;Heuristic algorithms;Europe;Estimation;Radar;Interference;Vehicle dynamics;Autonomous vehicles;Signal to noise ratio;FMCW;maximum-a-posteriori;source separation;generative score-based networks;sparsity},
  doi={10.23919/EuRAD61604.2024.10734954}}

@INPROCEEDINGS{Automotive_Radar_Challenges_ICASSP_2025,
  author={Rao, Sandeep and Narasimha, Rajan and Sun, Shunqiao},
  booktitle={IEEE International Conference on Acoustics, Speech and Signal Processing (ICASSP)}, 
  title={Signal Processing Challenges in Automotive Radar}, 
  year={2025},
  volume={},
  number={},
  pages={1-5},
  keywords={Industries;Meteorological radar;Technological innovation;Interference;Radar tracking;Radar signal processing;Spatial resolution;Sun;Speech processing;Automotive engineering;FMCW Radar;Automotive Radar;Interference Mitigation;Spatial Resolution},
  doi={10.1109/ICASSP49660.2025.10889506}}

@INPROCEEDINGS{pan2025game,
  author={Pan, Yunian and Li, Jun and Xu, Lifan and Sun, Shunqiao and Zhu, Quanyan},
  booktitle={2025 IEEE International Radar Conference (RADAR)}, 
  title={A Game-Theoretic Approach for High-Resolution Automotive {FMCW} Radar Interference Avoidance}, 
  year={2025},
  volume={},
  number={},
  pages={1-6},
  keywords={Adaptation models;Prevention and mitigation;Simulation;Interference;Real-time systems;Radar applications;Resource management;Signal resolution;Automotive engineering;Signal to noise ratio;automotive radar;game theory;frequency hopping;interference mitigation;interference avoidance},
  doi={10.1109/RADAR52380.2025.11031859}}

@article{li2024time_frequency_dual_domain,
  title={An Interference Mitigation Method for FMCW Radar Based on Time--Frequency Distribution and Dual-Domain Fusion Filtering},
  author={Li, Wei and Zhang, Hao and Wang, Jun and Liu, Xiaodong},
  journal={Sensors},
  volume={24},
  number={11},
  pages={3288},
  year={2024},
  publisher={MDPI},
  doi={10.3390/s24113288}
}

@article{jin2019adaptive_noise_canceller,
  title={Automotive Radar Interference Mitigation Using Adaptive Noise Canceller},
  author={Jin, Feng and Cao, Siyang},
  journal={IEEE Transactions on Vehicular Technology},
  volume={68},
  number={4},
  pages={3747--3754},
  year={2019},
  publisher={IEEE},
  doi={10.1109/TVT.2019.2901493}
}

@article{uysal2020phase_coded_fmcw,
  title={Phase-Coded FMCW Automotive Radar: System Design and Interference Mitigation},
  author={Uysal, Faruk},
  journal={IEEE Transactions on Vehicular Technology},
  volume={69},
  number={1},
  pages={270--281},
  year={2020},
  publisher={IEEE},
  doi={10.1109/TVT.2019.2953305}
}

@article{hakobyan2019high,
  title={High-performance automotive radar: A review of signal processing algorithms and modulation schemes},
  author={Hakobyan, Gor and Yang, Bin},
  journal={IEEE Signal Processing Magazine},
  volume={36},
  number={5},
  pages={32--44},
  year={2019},
  publisher={IEEE}
}

@techreport{buller2018radar,
  title={Radar congestion study},
  author={Buller, William and Wilson, Brian and Garbarino, Joseph and Kelly, Jack and Thelen, Brian and Belzowski, Bruce M and others},
  year={2018},
  institution={United States. Department of Transportation. National Highway Traffic Safety~…}
}

@inproceedings{ossowska2021imiko,
  title={Imiko-radar: Interference measurements of today's automotive radar sensors},
  author={Ossowska, Alicja and Sit, Leen and Manchala, Sarath and Vogler, Thomas and Vanova, Jana and Hejtmanek, Jan and Krupinski, Kevin and Luebbert, Urs},
  booktitle={2021 21st International Radar Symposium (IRS)},
  pages={1--6},
  year={2021},
  organization={IEEE}
}

@book{cesa2006prediction,
  title={Prediction, learning, and games},
  author={Cesa-Bianchi, Nicolo and Lugosi, G{\'a}bor},
  year={2006},
  publisher={Cambridge university press}
}

@article{xu2023automotive,
  title={Automotive FMCW radar with difference co-chirps},
  author={Xu, Lifan and Sun, Shunqiao and Mishra, Kumar Vijay and Zhang, Yimin D},
  journal={IEEE Transactions on Aerospace and Electronic Systems},
  volume={59},
  number={6},
  pages={8145--8165},
  year={2023},
  publisher={IEEE}
}

@INPROCEEDINGS{xinyi2025,
  author={Wei, Xinyi and Youn, Jihwan and Overdevest, Jeroen and Li, Jun and Ravindran, Satish and van Sloun, Ruud J.G.},
  booktitle={ICASSP 2025 - 2025 IEEE International Conference on Acoustics, Speech and Signal Processing (ICASSP)}, 
  title={Deep Unfolding Using Score-based Generative Networks for Automotive Radar Interference Mitigation}, 
  year={2025},
  volume={},
  number={},
  pages={1-5},
  keywords={Deep learning;Training;Source separation;Prevention and mitigation;Signal processing algorithms;Interference;Radar signal processing;Vehicle dynamics;Speech processing;Signal to noise ratio;score-based deep generative networks;deep unfolding;source separation;interference mitigation;automotive radar},
  doi={10.1109/ICASSP49660.2025.10889785}}

@INPROCEEDINGS{cuccoli2025highway,
  author={Cuccoli, F. and Passeri, M. and Facheris, L. and Bazzi, A. and Miccoli, F.},
  booktitle={2025 26th International Radar Symposium (IRS)},
  title={Simulations of Interference-Limited {OFDM} and {FMCW} Radars in Realistic Highway Traffic},
  year={2025},
  address={Hamburg, Germany},
  pages={1--6},
  note={doi: 10.23919/IRS64527.2025.11046018},
  doi={10.23919/IRS64527.2025.11046018}}

\end{document}